\pdfoutput=1

\documentclass[11pt,letterpaper]{article}
\usepackage[hscale=0.7,vscale=0.8]{geometry}
\usepackage{mystyle} 
\pgfplotsset{compat=1.14}

\title{Black-Box Strategies and Equilibrium for Games with Cumulative Prospect Theoretic Players}
\author{Soham R.\ Phade and Venkat Anantharam
\thanks{Research supported by  
the NSF grants CNS--1527846, CCF--1618145 and CCF-1901004,
by the NSF Science \& Technology
Center for Science of Information Grant number CCF-0939370
and by the William and Flora Hewlett Foundation 
supported Center for Long Term Cybersecurity at Berkeley.
}
\thanks{The authors are with the Department of Electrical Engineering and Computer Sciences, University of California, Berkeley, Berkeley, CA 94720.
        {\tt\small soham\_phade@berkeley.edu, ananth@eecs.berkeley.edu}}%
}

\newcommand{\BAR}{\cal{A}}
\newcommand{\BBR}{\cal{B}}
\newcommand{\co}{\overline{co}}

\newcommand{\pNE}{\mathrm{pNE}}
\newcommand{\mNE}{\mathrm{mNE}}
\newcommand{\BBNE}{\mathrm{BBNE}}
\newcommand{\mBBNE}{\mathrm{mBBNE}}

\newcommand{\cob}{\color{black}}

\begin{document}
\maketitle


\abstract{
	The betweenness property of preference relations states that a probability mixture of two lotteries should lie between them in preference. 
	It is a weakened form of the independence property and hence satisfied 
	in expected utility theory (EUT). 
	Experimental violations of betweenness are well-documented and several preference theories, notably cumulative prospect theory (CPT), do not satisfy betweenness. 
	We prove that CPT preferences satisfy betweenness if and only if they conform with EUT preferences.
	In game theory, lack of betweenness in the players' preference relations makes it essential to distinguish between the two interpretations of a mixed action by a player -- conscious randomizations by the player and the uncertainty in the beliefs of the opponents. 
	We elaborate on this distinction and study its implication 
	for the definition of Nash equilibrium. 
	This results in four different notions of equilibrium, with pure and mixed action Nash equilibrium being two of them. 
	We dub the other two pure and mixed black-box strategy Nash equilibrium respectively.
	We resolve the issue of existence 
	of such equilibria and examine how these different notions of equilibrium compare with each other.
}


\section{Introduction}
\label{sec: intro}

There is a large amount of evidence that human agents as decision-makers do not conform to the independence axiom of expected utility theory (EUT).
(See, for example, \citet{allais1953extension,weber1987recent} and \citet{machina1992choice}.)
This has led to the study of several 
alternate theories 
that do away with the independence axiom \citep{machina2014nonexpected}.
Amongst these, the cumulative prospect theory (CPT) of \citet{tversky1992advances} stands out since it accommodates many of 
the empirically observed behavioral features 
from human experiments
without losing much analytical tractability \citep{wakker2010prospect}.
Further, it includes EUT as a special case.

The independence axiom says that if lottery $L_1$ is weakly preferred over lottery $L_2$ by an agent (i.e. the agent wants lottery $L_1$ at least as much as lottery $L_2$), and $L$ is some other lottery, then, for $0 \leq \alpha \leq 1$, the combined lottery $\alpha L_1 + (1-\alpha) L$ is weakly preferred over the combined lottery $\alpha L_2 + (1- \alpha) L$ by that agent.
A weakened form of the independence axiom, called betweenness,
says that if lottery $L_1$ is weakly preferred over lottery $L_2$ (by an agent), then, for any $0 \leq \alpha \leq 1$, the mixed lottery $L = \alpha L_1 + (1 - \alpha) L_2$ must lie between the lotteries $L_1$ and $L_2$ in preference.
Betweenness implies that if an agent is indifferent between $L_1$ and $L_2$, then she is indifferent between any mixtures of them too.
It is known that independence implies betweenness, but betweenness does not imply independence \citep{chew1989axiomatic}.
As a result, EUT preferences, which are known to satisfy the independence axiom, also satisfy betweenness. 
CPT preferences, on the other hand, do not satisfy betweenness in general (see example~\ref{ex: non-monotone_lotteries}).
In fact, in theorem~\ref{thm: CPTbet=indep}, we show that CPT preferences satisfy betweenness if and only if they are EUT preferences
(recall that EUT preferences are a special case of CPT preferences).
Several empirical studies show systematic violations of betweenness \citep{camerer1994violations,agranov2017stochastic,dwenger2012flipping,sopher2000stochastic},
and this makes the use of CPT more attractive than EUT for modeling human preferences.
Further evidence comes from \citet{camerer1994violations}, where the authors fit data from nine studies using three non-EUT models, one of them being CPT, to find that, compared to the EUT model, the non-EUT models perform better.


Suppose in a non-cooperative game that 
given her beliefs about the other players, a player is indifferent between two of her actions.
Then according to EUT, she should be indifferent between any of the mixtures of these two actions.
This facilitates the proof of the existence of a Nash equilibrium in mixed actions for 
such games.
However, with CPT preferences, the player could either prefer some mixture of these two actions over the individual actions or vice versa.

As a result, it is important to make a distinction
in CPT 
regarding
whether the players can actively randomize over their actions or not. 
One way to 
enable active randomization
is by assuming that 
each player has access
to a randomizing device and the 
player
can ``commit'' to the outcome of this randomization. 
The commitment assumption is necessary, as is evident from the following scenario (the gambles presented below appear in \citet{prelec1990pseudo}). 
Alice needs to choose between the following two actions:
\begin{enumerate}
	\item Action $1$ results in a lottery $L_1 = \{( 0.34,\$20{,}000); (0.66,\$0)\}$, i.e. she receives $\$20{,}000$ with probability $0.34$ and nothing with probability $0.66$.
	\item Action $2$ results in a lottery $L_2 = \{( 0.17,\$30{,}000); (0.83,\$0)\}$.
\end{enumerate}
(See example~\ref{ex: Alice_game} for an instance of a $2$-player game with Alice and Bob, where Alice has two actions that result in the above two lotteries.) 
Note that $L_1$ is a less risky gamble with a lower reward and $L_2$ is a more risky gamble with a higher reward. 
Now consider a compound lottery $L = 16/17 L_1 + 1/17 L_2$.
Substituting for the lotteries $L_1$ and $L_2$ we get $L$ in its reduced form to be 
$$L = \{(0.01,\$30{,}000); (0.32,\$20{,}000); (0.67,\$0)\}.$$
In example~\ref{ex: Alice_nonbetweenness}, we provide a CPT model for Alice's preferences that result in lottery $L_1$ being preferred over lottery $L_2$, whereas lottery $L$ is preferred over lotteries $L_1$ and $L_2$.
Roughly speaking, the underlying intuition is that
Alice is risk-averse in general, and she prefers lottery $L_1$ over lottery $L_2$.
However, she overweights the small $1\%$ chance of getting $\$30{,}000$ in $L$
and finds it lucrative enough to make her prefer lottery $L$ over both the lotteries $L_1$ and $L_2$. 
Let us say Alice has a biased coin that she can use to implement the randomized strategy. 
Now, if Alice tossed the coin, and the outcome was to play action 2, then in the absence of commitment, she will switch to action 1, since she prefers lottery $L_1$ over lottery $L_2$. 
Commitment can be achieved, for example, by asking a trusted party to implement the randomized strategy for her or use a device that would carry out the randomization and implement the outcome without further consultation with Alice. 
Regardless of the implementation mechanism, we will call such randomized strategies 
\emph{black-box strategies}.
The above problem of commitment is closely related to the problem of using non-EUT models in dynamic decisions.
For an interesting discussion on this topic, see \citet[Appendix~C]{wakker2010prospect} and the references therein.

Traditionally, mixed actions have been considered from two viewpoints, especially in the context of mixed action Nash equilibrium.
According to the first viewpoint, these are conscious randomizations by the players -- each player only knows her mixed action and not its pure realization.
The notion of black-box strategies captures this interpretation of mixed actions.
According to the other viewpoint, players do not randomize, and each player chooses some definite action,
but the other players need not know which one, and the mixture 
represents 
their uncertainty, i.e. their conjecture 
about her choice.
\cite{aumann1995epistemic} establish mixed action Nash equilibrium as an equilibrium in conjectures provided they satisfy certain epistemic conditions regarding the common knowledge amongst the players.

In the absence of the betweenness condition, these two viewpoints give rise to different notions of Nash equilibria.
Throughout we assume that the player set and their corresponding action sets and payoff functions, as well as the rationality of each player, are common knowledge.
A player is said to be rational if, given her beliefs and her preferences, she does not play any suboptimal strategy.
Suppose each player plays a fixed action, and these fixed actions are common knowledge, then we get back the notion of \emph{pure Nash equilibrium} (see definition~\ref{def: pureNE}).
If each player plays a fixed action, but the other players have mixed conjectures over her action, and these conjectures are common knowledge, then this gives us \emph{mixed action Nash equilibrium} (see definition~\ref{def: mixedNE}).
This coincides with the notion of Nash equilibrium as defined in \citet{keskin2016equilibrium} and studied further in \citet{phade2019geometry}.
Now suppose each player can randomize over her actions and hence implement a black-box strategy.
If each player plays a fixed black-box strategy and these black-box strategies are common knowledge, then this gives rise to a new notion of equilibrium. 
We call it \emph{black-box strategy Nash equilibrium} (see definition~\ref{def: blackboxNE}).
If each player plays a fixed black-box strategy and the other players have mixed conjectures over her black-box strategy, and these conjectures are common knowledge, then we get the notion of \emph{mixed black-box strategy Nash equilibrium} (see definition~\ref{def: mixedblackboxNE}).

In the setting of an $n$-player normal form game with 
real valued 
payoff functions, the pure Nash equilibria do not depend on the specific CPT features of the players, i.e. the reference point, the value function and the two probability weighting functions, one for gains and one for losses.
Hence the traditional result on the lack of guarantee for the existence of a pure Nash equilibrium continues to hold when players have CPT preferences.
\citet{keskin2016equilibrium} proves the existence of a mixed action Nash equilibrium for any finite game when players have CPT preferences.
In example~\ref{ex: No blackboxNE}, we show that a finite game may not have any black-box strategy Nash equilibrium. 
On the other hand, in theorem~\ref{thm: BBNEexists}, we prove our main result that for any finite game with players having CPT preferences, there exists a mixed black-box strategy Nash equilibrium.
If the players have EUT preferences, then the notions of black-box strategy Nash equilibrium and mixed black-box strategy Nash equilibrium are equivalent to the notion of mixed action Nash equilibrium (when interpreted appropriately; see the remark before proposition~\ref{prop: compare}; see also figure~\ref{fig: venn_diagram}).

The paper is organized as follows. 
In section~\ref{sec: CPT_betweenness}, we describe the CPT setup and establish that under this setup betweenness is equivalent to independence (theorem~\ref{thm: CPTbet=indep}).
In section~\ref{sec: equilibrium}, we describe an $n$-player non-cooperative game setup and define various notions of Nash equilibrium in the absence of betweenness, in particular with CPT preferences.
We discuss the questions concerning their existence and how these different notions of equilibria compare with each other.
In section~\ref{sec: conclusion}, we conclude with a table that summarizes the results.

To close this section, we introduce some notational conventions that will be used in the document. 
If $Z$ is a Polish space 
(complete separable metric space), 
let $\cal{P}(Z)$ denote the set of all probability measures on $(Z, \cal{F})$, where $\cal{F}$ is the Borel sigma-algebra 
of $Z$. 
Let $\supp (p)$ denote the support of a distribution $p \in \cal{P}(Z)$, i.e. the smallest closed subset of $Z$ such that $p(Z) = 1$.
Let $\Delta_f(Z) \subset \cal{P}(Z)$ denote the set of all probability distributions that have a finite support. 
For any element $p \in \Delta_f(Z)$, let $p[z]$ denote the probability of $z \in Z$ assigned by $p$.
For $z \in Z$, let $\1\{z\} \in \Delta_f(Z)$ denote the probability distribution such that $p[z] = 1$.
If $Z$ is finite (and hence a Polish space with respect to the discrete topology), let $\Delta(Z)$ denote the set of all probability distributions on the set $Z$, viz.
\[
	\Delta(Z) = \cal{P}(Z) = \Delta_f(Z) = \l\{ (p[z])_{z \in Z} \bigg | p[z] \geq 0 \; \forall z \in Z, \sum_{z \in Z} p[z] = 1\r\},
\]
with the usual topology.
Let $\Delta^{m-1}$ denote the standard $(m-1)$-simplex, i.e. $\Delta(\{1, \dots, m\})$.
If $Z$ is a subset of a Euclidean space, then let $co(Z)$ denote the convex hull of $Z$, and let $\co(Z)$ denote the closed convex hull of $Z$.



\section{Cumulative Prospect Theory and Betweenness}
\label{sec: CPT_betweenness}

We first describe the setup for CPT (for more details see \citet{wakker2010prospect}). 
Each person is associated with a \emph{reference point} $r \in \bbR$, a \emph{value function} $v : \bbR \to \bbR$, and two \emph{probability weighting functions} $w^\pm:[0,1] \to [0,1]$, $w^+$ for gains and $w^-$ for losses. The function $v(x)$ satisfies:
\begin{inparaenum}[(i)]
	\item it is continuous in $x$,
	\item $v(r) = 0$,
    \item it is strictly increasing in $x$.
\end{inparaenum}
The value function is generally assumed to be convex in the losses frame ($x < r$) and concave in the gains frame ($x \geq r$), and to be steeper in the losses frame than in the gains frame in the sense that $v(r-z) \leq -v(r+z)$ for all $z \ge 0$. 
However, these assumptions are not needed for the results in this paper to hold.
The probability weighting functions $w^\pm: [0,1] \to [0,1]$ satisfy:
\begin{inparaenum}[(i)]
	\item they are continuous,
	\item they are strictly increasing,
	\item $w^\pm(0) = 0$ and $w^\pm(1) = 1$.
\end{inparaenum}
We say that $(r, v, w^\pm)$ are the \emph{CPT features} of that person.

Suppose a person faces a \emph{lottery} 
(or \emph{prospect}) $L := \{(p_k,z_k)\}_{1 \leq k \leq m}$,
where $z_k \in \bbR, 1 \leq k \leq m$, denotes an \emph{outcome} and $p_k, 1 \leq k \leq m$, is the probability with which outcome $z_k$ occurs. 
We assume that 
$\sum_{k=1}^m p_k = 1$. (Note that we are allowed to have $p_k =0$ for some values of $k$, and 
we can have 
$z_k = z_{k'}$ even when $k \neq k'$.) 
Let $z := (z_k)_{1 \leq k \leq m}$ and $p := (p_k)_{1 \leq k \leq m}$. We denote $L$ as $(p,z)$ and refer to the vector $z$ as an \emph{outcome profile} and $p$ as a \emph{probability vector}.

Let $\alpha := (\alpha_1,\dots,\alpha_m)$ be a permutation of $(1,\dots,m)$ such that
\begin{equation}\label{eq: order}
	z_{\alpha_1} \geq z_{\alpha_2} \geq \dots \geq z_{\alpha_m}.
\end{equation}
Let $0 \leq k_r \leq m$ be such that $z_{\alpha_k} \geq r$ for $1 \leq k \leq k_r$ and $z_{\alpha_k} < r$ for $k_r < k \leq m$. (Here $k_r = 0$ when $z_{\alpha_k} < r$ for all $1 \leq k \leq m$.) The \emph{CPT value} $V(L)$ of the prospect $L$ is evaluated using the value function $v(\cdot)$ and the probability weighting functions $w^{\pm}(\cdot)$ as follows:
\begin{equation}\label{eq: CPT_value_discrete}
	V(L) := \sum_{k=1}^{k_{r}} \pi_k^+(p,\alpha) v(z_{\alpha_k}) + \sum_{k=k_r+1}^m \pi_k^-(p,\alpha) v(z_{\alpha_k}),
\end{equation}
where 
$\pi^+_k(p,\alpha), 1 \leq k \leq k_{r},$ and $\pi^-_k(p,\alpha), k_r < k \leq m,$
are \emph{decision weights} defined via:
\begin{align*}
	\pi^+_{1}(p,\alpha) &:= w^+(p_{\alpha_1}),\\ 
	\pi_k^+(p,\alpha) &:= w^+(p_{\alpha_1} + \dots + p_{\alpha_{k}}) - w^+(p_{\alpha_1} + \dots + p_{\alpha_{k-1}}) &\text{ for } &1 < k \leq m, \\
	 \pi_k^-(p,\alpha) &:= w^-(p_{\alpha_m} + \dots + p_{\alpha_k}) - w^-(p_{\alpha_m} + \dots + p_{\alpha_{k+1}}) &\text{ for } &1 \leq k < m,\\
	 \pi^-_{m}(p,\alpha) &:= w^-(p_{\alpha_m}). 
\end{align*}
Although the expression on the right in equation~(\ref{eq: CPT_value_discrete}) depends on the permutation $\alpha$, one can check that the formula evaluates to the same value $V(L)$ as long as the permutation $\alpha$ satisfies (\ref{eq: order}). The CPT value in equation~(\ref{eq: CPT_value_discrete}) can equivalently be written as:
\begin{align}\label{eq: CPT_value_cumulative}
	V(L) &= \sum_{k = 1}^{k_r - 1} w^+\l(\sum_{i = 1}^k p_{\alpha_i}\r)\l[v(z_{\alpha_k}) - v(z_{\alpha_{k+1}})\r] \nonumber\\
	&+ w^+\l(\sum_{i = 1}^{k_r} p_{\alpha_i}\r)v\l(z_{\alpha_{k_r}}\r) + w^-\l(\sum_{i = k_r + 1}^{m} p_{\alpha_i}\r)v(z_{\alpha_{k_r+1}}) \nonumber \\
	&+ \sum_{k = k_r + 1}^{m-1} w^-\l(\sum_{i = k+1}^m p_{\alpha_i}\r)\l[v(z_{\alpha_{k+1}}) - v(z_{\alpha_{k}})\r].
\end{align}

A person is said to have CPT preferences if, given a choice between prospect $L_1$ and prospect $L_2$, she chooses the one with higher CPT value. 



We now define some axioms for preferences over lotteries.
We are interested in ``mixtures'' of lotteries, i.e. lotteries with other lotteries as outcomes.
Consider a (two-stage) compound lottery $K := \{(q^j, L^j)\}_{1 \leq j \leq t}$, where $L^j = (p^j, z^j), 1 \leq j \leq t$, are lotteries over real outcomes and $q^j$ is the chance of lottery $L^j$.
We assume that $\sum_{j= 1}^t q^j = 1$.
A two-stage compound lottery can be reduced to a single-stage lottery by multiplying the probability vector $p^j$ corresponding to the lottery $L^j$ by $q^j$ for each $j, 1 \leq j \leq t$, 
and then adding the probabilities of identical outcomes across all the lotteries $L^j, 1 \leq j \leq t$.
Let $\sum_{j = 1}^t q^j L^j$ denote the reduced lottery corresponding to the compound lottery $K$.

Let $\preceq$ denote a preference relation over single-stage lotteries.
We assume $\preceq$ to be a weak order, i.e. $\preceq$ is transitive (if $L_1 \preceq L_2$ and $L_2 \preceq L_3$, then $L_1 \preceq L_3$) and complete (for all 
$L_1, L_2$, we have $L_1 \preceq L_2$ or $L_2 \preceq L_1$, where possibly both preferences hold).
The additional binary relations $\succeq, \sim, \prec$ and $\succ$ are derived from $\preceq$ in the usual manner.
A preference relation $\preceq$ is a CPT preference relation if there exist CPT features $(r, v, w^\pm)$ such that $L_1 \preceq L_2$ iff $V(L_1) \leq V(L_2)$.
Note that a CPT preference relation is a weak order.
A preference relation $\preceq$ satisfies \emph{independence} if for any lotteries $L_1, L_2$ and $L$, and any constant $0 \leq \alpha \leq 1$, $L_1 \preceq L_2$ implies $\alpha L_1 + (1-\alpha)L \preceq \alpha L_2 + (1-\alpha) L$.
A preference relation $\preceq$ satisfies \emph{betweenness} if for any lotteries $L_1 \preceq L_2$, we have $L_1 \preceq \alpha L_1 + (1-\alpha)L_2 \preceq L_2$, for all $0 \leq \alpha \leq 1$.
A preference relation $\preceq$ satisfies \emph{weak betweenness} if for any lotteries $L_1 \sim L_2$, we have $L_1 \sim \alpha L_1 + (1 - \alpha) L_2$, for all $0 \leq \alpha \leq 1$.

Suppose a preference relation $\preceq$ satisfies independence.
Then $L_1 \preceq L_2$ implies 
$$L_1 = \alpha L_1 + (1 - \alpha) L_1 \preceq \alpha L_1 + (1 - \alpha) L_2 \preceq \alpha L_2 + (1-\alpha) L_2 = L_2.$$
Thus, if a preference relation satisfies independence, then it satisfies betweenness.
Also, if a preference relation satisfies betweenness, then it satifies weak betweenness.

In the following example, we will provide CPT features for Alice so that her preferences agree with those described in section~\ref{sec: intro}.
This example also shows that cumulative prospect theory can give rise to preferences that do not satisfy betweenness.
\begin{example}
\label{ex: Alice_nonbetweenness}
Recall that Alice is faced with the following three lotteries:
\begin{gather*}
	L_1 = \{( 0.34,\$20,000); (0.66,\$0)\},\\
	L_2 = \{( 0.17,\$30,000); (0.83,\$0)\},\\
	L = \{(0.01,\$30,000); (0.32,\$20,000); (0.67,\$0)\}.
\end{gather*}
Let $r = 0$ be the reference point of Alice.
Thus all the outcomes lie in the gains domain.
Let $v(x) = x^{0.8}$ for $x \geq 0$; Alice is risk-averse in the gains domain.
Let the probability weighting function for gains be given by
\[
		w^{+}(p) = \exp\{-(-\ln p)^{0.6}\},
\]
a form suggested by \citet{prelec1998probability} (see figure~\ref{fig: wt_pwf}).
We won't need the probability weighting function for losses. 
Direct computations show that $V(L_1) = 968.96, V(L_2) = 932.29$, and $V(L) = 1022.51$ (all decimal numbers in this example are correct to two decimal places).
Thus the preference behavior of Alice, as described in section~\ref{sec: intro} (i.e., she prefers $L_1$ over $L_2$, but prefers $L$ over $L_1$ and $L_2$), is consistent with CPT and can be modeled, for example, with the CPT features stated here.	
\qed
\end{example}

The following example shows that CPT can give rise to preferences that do not satisfy weak betweenness (the lotteries and the CPT features presented below appear in \citet{keskin2016equilibrium}).

\begin{example}
	\label{ex: non-monotone_lotteries}
	Suppose Charlie has $r = 0$ as his reference point and $v(x) = x$ as his value function. 
	Let his probability weighting function for gains be given by
	\[
		w^{+}(p) = \exp\{-(-\ln p)^{0.5}\}.
	\]
	(See figure~\ref{fig: wt_pwf}.)
	We won't need the probability weighting function for losses since we consider only outcomes in the gains domain in this example. 
	Consider the lotteries $L_1 = \{(0.5,2\beta);(0.5,0)\}$ and $L_2 = \{(0.5,\beta + 1);(0.5,1)\}$, where $\beta = 1/w^+(0.5) = 2.299$ (all decimal numbers in this example are correct to three decimal places).
	Direct computations reveal that $V(L_1) = V(L_2) = 2.000 > V(0.5L_1 + 0.5L_2)= 1.985$.

\qed
\end{example}


\input{plots/wt_plot_alice}

Given a utility function $u: \bbR \to \bbR$ 
(assumed to be continuous and strictly increasing), 
the expected utility of a lottery $L = \{(p_k, z_k)\}_{1 \leq k \leq m}$ is defined as $U(L) := \sum_{k = 1}^m p_k u(z_k)$.
A preference relation $\preceq$ is said to be an EUT preference relation if there exists a utility function $u$ such that $L_1 \preceq L_2$ iff $U(L_1) \leq U(L_2)$.
Note that if the CPT probability weighting functions are linear, i.e. $w^\pm(p) = p$ for $0 \leq p \leq 1$, then the CPT value of a lottery coincides with the expected utility of that lottery with respect to the utility function $u = v$.
It is well known that EUT preference relations satisfy independence and hence betweenness.
Several generalizations of EUT have been obtained by weakening the independence axiom and assuming only betweenness, for example, weighted utility theory \citep{chew1979alpha,hong1983generalization}, skew-symmetric bilinear utility \citep{fishburn1988nonlinear,bordley1991ssb}, implicit expected utility \citep{dekel1986axiomatic,chew1989axiomatic} and disappointment aversion theory \citep{gul1991theory,bordley1992intransitive}.
The following theorem shows that in the restricted setting of CPT preferences, betweenness and independence are equivalent.

\begin{theorem}
\label{thm: CPTbet=indep}
	If $\preceq$ is a CPT preference relation, then the following are equivalent:
	\begin{enumerate}[(i)]
		\item $\preceq$ is an EUT preference relation,
		\item $\preceq$ satisfies independence,	
		\item $\preceq$ satisfies betweenness.
	\end{enumerate}
\end{theorem}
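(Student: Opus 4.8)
The plan is to prove the cycle (i) $\Rightarrow$ (ii) $\Rightarrow$ (iii) $\Rightarrow$ (i). The first two implications are essentially already in hand. The discussion preceding the theorem records that independence implies betweenness, giving (ii) $\Rightarrow$ (iii); and since EUT is the special case of CPT with linear weighting functions, for which $V(L)=\sum_k p_k v(z_k)$, an EUT relation satisfies independence by the well-known fact quoted in the text, giving (i) $\Rightarrow$ (ii). All the real work is in (iii) $\Rightarrow$ (i): I must show that a CPT relation satisfying betweenness has linear weighting functions $w^\pm$, for then its value functional is expected utility with $u=v$.

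Since betweenness implies weak betweenness, it suffices to exploit that indifferent lotteries stay indifferent under mixing. First I would work in the gains domain. Fix the reference point $r$ and pick outcomes $a>c>r$; because $v$ is continuous and strictly increasing with $v(r)=0$, as $a,c$ vary the ratio $v(c)/v(a)$ sweeps all of $(0,1)$. Let $L_1=\1\{c\}$ be the sure outcome $c$ and $L_2=\{(p,a),(1-p,r)\}$, with $p$ chosen so that $w^+(p)=v(c)/v(a)$, so that $L_1\sim L_2$. Mixing, $L=\alpha L_1+(1-\alpha)L_2$ is the three-outcome gamble putting mass $\alpha$ on $c$, $(1-\alpha)p$ on $a$, and the remainder on $r$; weak betweenness forces $V(L)=v(c)$ for every $\alpha$. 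Writing out the rank-dependent value of $L$ using equation~(\ref{eq: CPT_value_discrete}), substituting $v(c)=w^+(p)v(a)$, and cancelling the common factor $v(a)$, I obtain, with $g:=w^+$ and $\beta:=1-\alpha$, the functional equation
\[
	g(\beta p)\,\bigl(1-g(p)\bigr)=g(p)\,\bigl(1-g(1-\beta(1-p))\bigr), \qquad p,\beta\in(0,1).
\]

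The crux, and the step I expect to be the main obstacle, is to show that the only continuous, strictly increasing solution with $g(0)=0$ and $g(1)=1$ is $g(p)=p$, using no differentiability. I would introduce the dual weight $\phi(x):=1-g(1-x)$ and rewrite the equation as $g(\beta p)\,\phi(1-p)=g(p)\,\phi(\beta(1-p))$. Letting $p\to 0$, the right-hand ratio converges by continuity of $\phi$, so $\lim_{p\to0} g(\beta p)/g(p)=\phi(\beta)=1-g(1-\beta)=:m(\beta)$ exists for every $\beta$. Splitting the ratio through an intermediate scale shows $m(\beta_1\beta_2)=m(\beta_1)m(\beta_2)$, and since $m$ is continuous and positive on $(0,1)$ it must be a power, $m(\beta)=\beta^\rho$ with $\rho>0$; hence $g(s)=1-(1-s)^\rho$ and $\phi(x)=x^\rho$. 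Substituting these forms back into $g(\beta p)\phi(1-p)=g(p)\phi(\beta(1-p))$ and cancelling $(1-p)^\rho$ gives $1-(1-\beta p)^\rho=\beta^\rho\bigl(1-(1-p)^\rho\bigr)$; comparing leading orders as $p\to 0$ yields $\rho\beta=\rho\beta^\rho$ for all $\beta$, so $\rho=1$ and $g(p)=p$.

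Finally I would run the symmetric construction with outcomes $a<c<r$ in the loss domain, where the two-outcome and three-outcome values are governed by $w^-$ in exactly the same way, to conclude that $w^-$ is also the identity. With $w^+=w^-=\mathrm{id}$ the decision weights in equation~(\ref{eq: CPT_value_discrete}) collapse to the raw probabilities and $V(L)=\sum_k p_k v(z_k)$, so $\preceq$ is the EUT relation with utility $u=v$, closing the cycle. The technical heart is the functional-equation step above; the regularity needed (existence of the scaling limit and its multiplicativity) is forced by the equation itself, so no smoothness beyond the standing continuity and monotonicity assumptions on $w^\pm$ is required.
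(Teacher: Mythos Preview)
Your proof is correct, and it takes a genuinely different route from the paper's. Both arguments reduce (iii)$\Rightarrow$(i) to showing that $w^+$ (and symmetrically $w^-$) is the identity, but the reductions and the resulting functional equations differ.

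The paper starts from two three-outcome lotteries $A=\{(p_1,z_1),(p_2,z_2),(1-p_1-p_2,r)\}$ and $B=\{(q_1,z_1),(q_2,z_2),(1-q_1-q_2,r)\}$ in the gains domain, tunes the ratio $(v(z_1)-v(z_2))/v(z_2)$ to force $V(A)=V(B)$, and then uses betweenness on mixtures $\alpha A+(1-\alpha)B$ to obtain a five-point relation: for any $0\le a_1<c_1<b<c_2<a_2\le1$ with $(a_2-b)(b-c_1)=(b-a_1)(c_2-b)$,
\[
[w(a_2)-w(b)][w(b)-w(c_1)]=[w(b)-w(a_1)][w(c_2)-w(b)].
\]
This is solved (Lemma~\ref{lem: w_functionaleq}) by evaluating at specific rational arguments ($1/4,1/3,1/2,2/3,3/4$) to deduce $w(1/2)=1/2$, then rescaling to the midpoint identity $w((x+y)/2)=(w(x)+w(y))/2$ on dyadics, and finally invoking continuity.

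Your approach instead compares a sure outcome $c$ with a two-outcome gamble $\{(p,a),(1-p,r)\}$ calibrated by $w^+(p)=v(c)/v(a)$, yielding the two-variable equation $g(\beta p)\phi(1-p)=g(p)\phi(\beta(1-p))$ with $\phi(x)=1-g(1-x)$. The key idea is to read off from this that $g(\beta p)/g(p)=\phi(\beta(1-p))/\phi(1-p)$, so the scaling limit $m(\beta)=\lim_{p\to0}g(\beta p)/g(p)=\phi(\beta)$ exists, is multiplicative, continuous and increasing, hence $\phi(\beta)=\beta^\rho$; substituting back forces $\rho=1$. This is more conceptual and arguably slicker, leaning on the Cauchy multiplicative equation rather than ad hoc numerical substitutions. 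The paper's argument, by contrast, is entirely elementary and self-contained, avoiding any appeal to the classification of multiplicative functions. Both are valid; yours isolates the structural reason (regular variation of $g$ at $0$) more transparently, while the paper's keeps the prerequisites minimal.
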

\begin{proof}
Let the CPT preference relation $\preceq$ be given by $(r, v, w^\pm)$. 
Since an EUT preference relation satisfies independence, we get that (i) implies (ii).
Since betweenness is a weaker condition than independence, we get that (ii) implies (iii).
We will now show that if $\preceq$ satisfies betweenness, then the probability weighting functions are linear, i.e. $w^\pm(p) = p$ for $0 \leq p \leq 1$.
This will imply that $\preceq$ is an EUT preference relation with utility function $u = v$,
and hence complete the proof.

Assume that the CPT preference relation $\preceq$ satisfies betweenness.
Consider a lottery $A := \{(p_1, z_1), (p_2, z_2), (1 - p_1 - p_2, r)\}$ such that $z_1 \geq z_2 \geq r$, 
$p_1 \geq 0$, $p_2 > 0$ 
and $p_1 + p_2 \leq 1$.
By \eqref{eq: CPT_value_cumulative}, we have $$V(A) = \delta_1 w^+(P_1) + \delta_2 w^+(P_2),$$ where $\delta_1 := v(z_1) - v(z_2)$, $\delta_2 := v(z_2)$, $P_1 := p_1$ and $P_2 := p_1 + p_2$. 
Let lottery $B := \{(q_1, z_1), (q_2, z_2), (1 - q_1 - q_2, r)\}$ 
be such that $q_1,q_2 \geq 0$, $Q_1 := q_1 > p_1$, and $Q_2 := q_1 + q_2 < P_2$.
By \eqref{eq: CPT_value_cumulative}, we have $$V(B) = \delta_1 w^+(Q_1) + \delta_2 w^+(Q_2).$$
If $z_1, z_2, p_1, p_2, q_1$ and $q_2$ are such that
\begin{equation}
\label{eq: ratio_equal}
	\frac{\delta_1}{\delta_2} = \frac{w^+(P_2) - w^+(Q_2)}{w^+(Q_1) - w^+(P_1)},
\end{equation}
then $V(A) = V(B)$ and, by betweenness, for any $0 \leq \alpha \leq 1$ we have $V(A) = V(B) = V(\alpha A + (1-\alpha) B)$, i.e.
\begin{equation*}
	\delta_1 w^+(Q_1) + \delta_2 w^+(Q_2) = \delta_1 w^+(\alpha P_1 + (1-\alpha) Q_1) + \delta_2 w^+(\alpha P_2 + (1-\alpha)Q_2).
\end{equation*}
Using \eqref{eq: ratio_equal} we get
\begin{align}
\label{eq: ratio_weights_general}
	\l[w^+(P_2) - w^+(Q_2)\r]&\l[w^+(Q_1) - w^+(\alpha P_1 + (1-\alpha)Q_1))\r] \nonumber\\
	&= \l[w^+(Q_1) - w^+(P_1)\r]\l[w^+(\alpha P_2 + (1-\alpha)Q_2) - w^+(Q_2)\r].
\end{align}

Given any $0 \leq P_1 < Q_1 \leq Q_2 < P_2 \leq 1$, there exist $z_1$ and $z_2$ such that \eqref{eq: ratio_equal} holds. 
Indeed, take any $\delta > 0$ belonging to the range of the function $v$.
This exists because $v(r) = 0$ and $v$ is a strictly increasing function. 
Since $w^+$ is a strictly increasing function, we have
$$\kappa := \frac{w^+(P_2) - w^+(Q_2)}{w^+(Q_1) - w^+(P_1)} > 0.$$
Take 
$z_2 = v^{-1}({\delta}/{(1 + \kappa)})$ and $z_1 = v^{-1}(\delta)$.
These are well defined because $v$ is assumed to be continuous and strictly increasing, and $\delta$ belongs to its range.
Hence $z_1 > z_2 > r$ as required.
Thus \eqref{eq: ratio_weights_general} holds for any $0 \leq P_1 < Q_1 \leq Q_2 < P_2 \leq 1$. 
In particular,
when $Q_1 = Q_2$, we have
\begin{align*}
	\l[w^+(P_2) - w^+(Q)\r]&\l[w^+(Q) - w^+(R_1)\r] =  \l[w^+(Q) - w^+(P_1)\r]\l[w^+(R_2) - w^+(Q)\r],
\end{align*}	
where $Q := Q_1 = Q_2$, $R_1 := \alpha P_1 + (1-\alpha)Q$ and $R_2 := \alpha P_2 + (1-\alpha)Q$.
Equivalently, for any $0 \leq a_1 < c_1 < b < c_2 < a_2 \leq 1$ such that $(a_2 - b)(b - c_1) = (b - a_1)(c_2 - b)$, we have
 \begin{align*}
	\l[w^+(a_2) - w^+(b)\r]&\l[w^+(b) - w^+(c_1)\r] = \l[w^+(b) - w^+(a_1)\r]\l[w^+(c_2) - w^+(b)\r].
\end{align*}	
In lemma~\ref{lem: w_functionaleq}, we prove that the above condition implies $w^+(p) = p$, for $0 \leq p \leq 1$.
Similarly, we can show that $w^-(p) = p$, for $0 \leq p \leq 1$.
This completes the proof.
\end{proof}


\section{Equilibrium in black-box strategies}
\label{sec: equilibrium}

We now consider an $n$-player non-cooperative game where the players have CPT preferences. 
We will discuss several notions of equilibrium for such a game and will contrast them. 


Let $\Gamma := (N, (A_i)_{i \in N}, (x_i)_{i \in N})$ denote a \emph{game}, where $N := \{1, \dots, n\}$ is the set of \emph{players}, $A_i$ is the finite \emph{action set} of player $i$, and $x_i : A\to \bbR$ is the \emph{payoff function} for player $i$. 
Here $A := \prod_{i} A_i$ denotes the set of all \emph{action profiles} $a := (a_1, \dots, a_n)$.
Let 
$A_{-i} := \prod_{i \neq j} A_j$ 
denote the set of action profiles $a_{-i}$ of all players except player $i$.

\begin{definition}
\label{def: BAR_action_profile}
	For any action profile $a_{-i} \in A_{-i}$ of the opponents, we define the \emph{best response action set} of player $i$ to be
	\begin{equation}        \label{eq: pure-bestresponse}
	\BAR_i(a_{-i}) := \argmax_{a_i \in A_i}  x_i(a_i, a_{-i}).
\end{equation}
\end{definition}

\begin{definition}
\label{def: pureNE}
	An action profile $a = (a_1, \dots, a_n)$ is said to be a \emph{pure Nash equilibrium }if for each player $i \in N$, we have
	\[
		a_i \in \BAR_i(a_{-i}).
		\]
\end{definition}

The notion of pure Nash equilibrium is the same whether the players have CPT preferences or EUT preferences because only deterministic 
lotteries, comprised of being offered one outcome with probability $1$,
are considered in the framework of this notion. 
It is well known that for any given game $\Gamma$, a pure Nash equilibrium need not exist.

Let $\mu_{-i} \in \Delta(A_{-i})$ denote a \emph{belief} of player $i$ on the action profiles of her opponents.
Given 
the belief $\mu_{-i}$ of player $i$, 
if she decides to play action $a_i$, then 
she will face the lottery 
$\{(\mu_{-i}[a_{-i}], x_i(a_i, a_{-i}))\}_{a_{-i} \in A_{-i}}.$
\begin{definition}
\label{def: BAR}
	For any belief $\mu_{-i} \in \Delta(A_{-i})$, define the \emph{best response action set} of player $i$ as
 \begin{align}      \label{eq: cpt-bestresponse}
 \BAR_i(\mu_{-i}) := \argmax_{a_i \in A_i}  V_i\l(\l\{\l(\mu_{-i}[a_{-i}], x_i(a_i, a_{-i})\r)\r\}_{a_{-i} \in A_{-i}}\r).
 \end{align}
\end{definition}
Note that this definition is consistent with the definition of the best response action set that takes an action profile $a_{-i}$ of the opponents as its input (definition~\ref{def: BAR_action_profile}), if we interpret $a_{-i}$ as the belief $\1\{a_{-i}\} \in \Delta(A_{-i})$, since
$\BAR_i(\1\{a_{-i}\}) = \BAR_i(a_{-i})$.

Let $\sigma_i \in \Delta(A_i)$ denote a \emph{conjecture} over the action of player $i$. 
Let $\sigma := (\sigma_1, \dots, \sigma_n)$ denote a \emph{profile of conjectures}, and let $\sigma_{-i} := (\sigma_j)_{j \neq i}$ denote the profile of conjectures for all players except player $i$.
Let $\mu_{-i}(\sigma_{-i}) \in \Delta(A_{-i})$ be the belief induced by conjectures $\sigma_j, j \neq i$, given by
\[
	\mu_{-i}(\sigma_{-i})[a_{-i}] := \prod_{j \neq i} \sigma_j[a_{-i}],
\]
which is nothing but the product distribution induced by $\sigma_{-i}$.

\begin{definition}
\label{def: mixedNE}
	A conjecture profile $\sigma = (\sigma_1, \dots, \sigma_n)$ is said to be a \emph{mixed action Nash equilibrium} if, for each player $i$, we have
	\[
		a_i \in \BAR_i(\mu_{-i}(\sigma_{-i})), \text{ for all } a_i \in \supp \sigma_i.
	\]
\end{definition}

In other words, the conjecture $\sigma_i$ over the action of player $i$ should assign positive probabilities to only optimal actions of player $i$, given her belief $\mu_{-i}(\sigma_{-i})$.


It is well known that a mixed Nash equilibrium exists for every 
game with EUT players, see \citet{nash1951non}.
\citet{keskin2016equilibrium} generalizes the result of \citet{nash1951non} on the existence of a mixed action Nash equilibrium to the case when players have CPT preferences.

Let $B_i := \Delta(A_i)$ denote the set of all black-box strategies for player $i$ with a typical element denoted by $b_i \in B_i$. 
Recall that if player $i$ implements a black-box strategy $b_i$, then we interpret this as a trusted party other than the player sampling an action $a_i \in A_i$ from the distribution $b_i$ and playing action $a_i$ on behalf of player $i$.
We assume the usual topology on $B_i$.
Let $B := \prod_{i} B_i$ and $B_{-i} := \prod_{j \neq i} B_j$ with typical elements denoted by $b$ and $b_{-i}$, respectively.

Note that, although a conjecture $\sigma_i$ and a black-box strategy $b_i$ are mathematically equivalent, viz. they are elements of the same set $B_i = \Delta(A_i)$, they have different interpretations.
We will call $s_i \in \Delta(A_i)$ a \emph{mixture} of actions of player $i$ 
when we want to be agnostic to which interpretation is being 
imposed. 
Let $S_i := \Delta(A_i), S := \prod_i \Delta(A_i)$ and $S_{-i} := \prod_{j \neq i} S_i$ with typical elements denoted by $s_i, s$ and $s_{-i}$, respectively.
(Note that $S \neq \Delta(A)$ unless all but one player have singleton action sets.)

For any belief $\mu_{-i} \in \Delta(A_{-i})$ and any black-box strategy $b_i$ of player $i$, 
let $\mu(b_i, \mu_{-i}) \in \Delta(A)$ denote the product distribution given by
\[
	\mu(b_i, \mu_{-i})[a] := b_i[a_i] \mu_{-i}[a_{-i}].
\]
Given 
the belief $\mu_{-i}$ of player $i$, 
if she decides to implement the black-box strategy $b_i$, then she 
will 
face the lottery 
$\{\mu(b_i,\mu_{-i})[a], x_i(a))\}_{a \in A}$.
\begin{definition}
\label{def: BBR}
	For any belief $\mu_{-i} \in \Delta(A_{-i})$,
define the \emph{best response black-box strategy set} of player $i$ as
\begin{align*}
\BBR_i(\mu_{-i}) := \argmax_{b_i \in B_i} V_i\l(\l\{\l(\mu(b_i,\mu_{-i})[a], x_i(a)\r)\r\}_{a \in A}\r).
\end{align*}
\end{definition}

\begin{lemma}
\label{lem: BBR_closed}
	For any belief $\mu_{-i}$, the set $\BBR_i(\mu_{-i})$ is non-empty, and
	\[
		\co(\BBR_i(\mu_{-i})) = co(\BBR_i(\mu_{-i})).
	\]
\end{lemma}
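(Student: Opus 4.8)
The plan is to derive both assertions from the compactness of $\BBR_i(\mu_{-i})$. First I would show that the objective $b_i \mapsto V_i\left(\left\{\left(\mu(b_i,\mu_{-i})[a], x_i(a)\right)\right\}_{a \in A}\right)$ is continuous on the compact simplex $B_i = \Delta(A_i)$; this yields both non-emptiness and compactness of the argmax set. Then I would invoke the standard fact that the convex hull of a compact subset of a finite-dimensional Euclidean space is itself compact, hence closed, which gives $co(\BBR_i(\mu_{-i})) = \co(\BBR_i(\mu_{-i}))$.

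For the continuity step, note that $\mu_{-i}$ is fixed, so the map $b_i \mapsto \mu(b_i,\mu_{-i})$, given by $\mu(b_i,\mu_{-i})[a] = b_i[a_i]\,\mu_{-i}[a_{-i}]$, is linear in $b_i$ and therefore continuous from $B_i \subset \bbR^{|A_i|}$ into $\Delta(A) \subset \bbR^{|A|}$. The crucial observation is that the outcome profile $(x_i(a))_{a \in A}$ does not depend on $b_i$: only the probabilities attached to these fixed outcomes vary. Hence I may fix once and for all a single permutation $\alpha$ of $A$ ordering the frozen outcomes as in \eqref{eq: order}, together with the resulting fixed split index $k_r$. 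With $\alpha$ and $k_r$ held fixed, formula \eqref{eq: CPT_value_cumulative} expresses the CPT value as a finite sum whose terms are the fixed value-differences $v(z_{\alpha_k}) - v(z_{\alpha_{k+1}})$ multiplied by $w^\pm$ evaluated at the cumulative probabilities $\sum_i p_{\alpha_i}$. Since each cumulative probability is linear in the probability vector $p = \mu(b_i,\mu_{-i})$, hence in $b_i$, and since $w^\pm$ are continuous, the whole expression is continuous in $b_i$. I expect this to be the main point requiring care, precisely because \eqref{eq: CPT_value_cumulative} is written in a permutation-dependent way; the resolution is that ties among outcomes are harmless once the outcomes are frozen, so one fixed permutation serves for all $b_i$ and the value is the same for every admissible choice.

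Given continuity, non-emptiness follows because a continuous real-valued function attains its supremum on the non-empty compact set $B_i = \Delta(A_i)$. Moreover, writing $M := \max_{b_i \in B_i} V_i(\cdots)$, the set $\BBR_i(\mu_{-i})$ equals the preimage $\{b_i \in B_i : V_i(\cdots) = M\}$ of the closed set $\{M\}$ under a continuous map, hence is closed; being a closed subset of the compact $B_i$, it is compact.

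Finally, for the convex-hull assertion I would apply Carathéodory's theorem. Since $\BBR_i(\mu_{-i})$ is a compact subset of $\bbR^{|A_i|}$, writing $d := |A_i|$, every point of $co(\BBR_i(\mu_{-i}))$ is a convex combination of at most $d+1$ points of $\BBR_i(\mu_{-i})$. Thus $co(\BBR_i(\mu_{-i}))$ is the image of the compact set $\BBR_i(\mu_{-i})^{\,d+1} \times \Delta^{d}$ under the continuous map $\left((b^{(0)},\dots,b^{(d)}),(\lambda_0,\dots,\lambda_d)\right) \mapsto \sum_{j=0}^{d} \lambda_j\, b^{(j)}$, and is therefore compact, in particular closed. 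As $\co(\BBR_i(\mu_{-i}))$ is by definition the closure of $co(\BBR_i(\mu_{-i}))$, the two coincide.
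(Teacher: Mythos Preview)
Your proposal is correct and follows essentially the same approach as the paper: continuity of the CPT value in $b_i$ (the paper simply cites \citet{keskin2016equilibrium} for continuity of $V_i(p,z)$ in $p$ with $z$ fixed), hence non-emptiness and compactness of the argmax, and then the standard fact that the convex hull of a compact subset of Euclidean space is compact (the paper cites Rudin, you give the Carath\'eodory argument explicitly). Your treatment is more self-contained but not substantively different.
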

\begin{proof}
	For a lottery $L = (p,z)$, where $z = (z_k)_{1 \leq k \leq m}$ is the outcome profile, and $(p_k)_{1 \leq k \leq m}$ is the probability vector, the function $V_i(p,z)$ is continuous with respect to $p \in \Delta^{m-1}$
	 \citep{keskin2016equilibrium}.
	Thus,
	$V_i(\{(\mu(b_i, \mu_{-i})[a], x_i(a))\}_{a \in A})$ 
	is a function continuous with respect to $b_i \in B_i$,
	and hence
	$\BBR_i(\mu_{-i})$ is a non-empty closed subset
	of the compact space $B_i$. 
	Since the convex hull of a compact subset of a Euclidean space is compact \citep[Chapter~3]{rudin1991functional}, the set $co(\BBR_i(\mu_{-i}))$ is closed.
	This completes the proof. 
\end{proof}

Let us compare the two concepts: the best response action set (definition~\ref{def: BAR}) and the best response black-box strategy set (definition~\ref{def: BBR}). 
Even though both of them take the belief $\mu_{-i}$ of player $i$ as input,
the best response action set $\BAR_i(\mu_{-i})$ outputs a collection of actions of player $i$, 
whereas the best response black-box strategy set $\BBR_i(\mu_{-i})$ outputs a collection of black-box strategies of player $i$, which are probability distributions over the set of actions $a_i \in A_i$.
If we interpret an action $a_i$ as the mixture 
$\1\{a_i\} \in S_i = \Delta(A_i)$, 
and a black-box strategy $b_i$ as a mixture as well, then we can compare the two sets $\BAR(\mu_{-i})$ and $\BBR(\mu_{-i})$ as subsets of $S_i$.
The following example shows that, in general, the two sets can be disjoint, and hence quite distinct.

\begin{example}
	\label{ex: Alice_game}
	We consider a $2$-player game. Let Alice be player $1$, with action set $A_1 = \{1,2\}$, and let Bob be player $2$, with action set $A_2 = \{1,2,3\}$.
	Let the payoff function for Alice be as shown in figure~\ref{tab: Alice_payoff}.
	Let $\mu_{-1} = (0.17, 0.17, 0.66) \in \Delta(A_{-1}) = \Delta(A_2)$ be the belief of Alice.
	Then, as considered in section~\ref{sec: intro}, Alice faces the lottery
	$L_1 = \{( 0.34,\$20,000); (0.66,\$0)\}$
	if she plays action $1$ and
	the lottery
	$L_2 = \{( 0.17,\$30,000); (0.83,\$0)\}$
	if she plays action $2$.
	We retain the CPT features for Alice, as in example~\ref{ex: Alice_nonbetweenness}, viz.: $r = 0$, $v(x) = x^{0.8}$ for $x \geq 0$, and 
	\[
		w^{+}(p) = \exp\{-(-\ln p)^{0.6}\}.
	\]
	We saw that $V_1(L_1) = 968.96$, $V_1(L_2) = 932.29$, and $V(16/17 L_1 + 1/17 L_2) = 1022.51$ (all decimal numbers in this example are correct to two decimal places).
	Amongst all the mixtures, the maximum CPT value is achieved at the unique mixture $L^* = \alpha^* L_1 + (1-\alpha^*) L_2$, where $\alpha^* = 0.96$; we have $V_1(L^*) = 1023.16$.
	Thus, $\BAR_1(\mu_{-1}) = \{\1\{1\}\}$ and $\BBR_1(\mu_{-1}) = \{ (\alpha^*, 1 -\alpha^*) \}$.
\begin{figure}
\centering
\begin{tabular}{c | c | c | c |}
	 \multicolumn{1}{c}{}	& \multicolumn{1}{c}{$1$}   &  \multicolumn{1}{c}{$2$} &  \multicolumn{1}{c}{$3$}\\
	 \cline{2-4}
	 $1$ & $\$20{,}000$ & $\$20{,}000$  &  $\$0$\\
	 \cline{2-4}
	 $2$	& $\$30{,}000$ & $\$0$  & $\$0$ \\
	 \cline{2-4}
	 \end{tabular}
	 \caption{Payoff matrix for Alice in example~\ref{ex: Alice_game}. Rows and columns correspond to Alice's and Bob's actions respectively. The amount in each cell corresponds to Alice's payoff.}
	 \label{tab: Alice_payoff}
\end{figure}
\qed
\end{example}


For any black-box strategy profile $b_{-i}$ of the opponents,
let $\mu_{-i}(b_{-i}) \in \Delta(A_{-i})$ be the induced belief given by
\[
	\mu_{-i}(b_{-i})[a_{-i}] := \prod_{j \neq i} b_j[a_{-i}].
\]

\begin{definition}
\label{def: blackboxNE}
	A black-box strategy profile $b = (b_1, \dots, b_n)$ is said to be a \emph{black-box strategy Nash equilibrium} if, for each player $i$, we have
	\[
		b_i \in \BBR_i(\mu_{-i}(b_{-i})).
	\]
\end{definition}

If the players have EUT preferences, 
a conjecture profile $\sigma = (\sigma_1, \dots, \sigma_n)$ is a mixed action Nash equilibrium 
if and only if the black-box strategy profile $b = (b_1, \dots, b_n)$, where $b_i = \sigma_i$, for all $i \in N$, 
is a black-box strategy Nash equilibrium.
Thus, under EUT, the notion of a black-box strategy Nash equilibrium is equivalent to the notion of a mixed action Nash equilibrium,
although there is still a conceptual difference between these two notions based on the interpretations for the mixtures of actions.
Further, we have the existence of a black-box strategy Nash equilibrium for any game when players have EUT preferences from the 
well-known result about the existence of a mixed action Nash equilibrium.
The following example shows that, in general, a black-box strategy Nash equilibrium may not exist when players have CPT preferences.

\begin{example}
\label{ex: No blackboxNE}
Consider a $2\times2$ game (i.e a $2$-player game where each player has two actions $\{0,1\}$) with the payoff matrices as shown in figure~\ref{tab: 2x2 game}.
Let the reference points be $r_1 = r_2 = 0$. 
Let $v_i(\cdot)$ be 
the identity function
for $i = 1,2$.
Let the probability weighting functions for gains for the two players be given by
\[
	w_i^+(p) = \exp \{-(-\ln p)^{\gamma_i}\}, \text{ for } i = 1,2,
\]
where $\gamma_1 = 0.5$ and $\gamma_2 = 1$.
We do not need the probability weighting functions for losses since all the outcomes lie in the gains domain for both the players.
Notice that player $2$ has EUT preferences since $w_2^+(p) = p$.

\begin{figure}
\parbox{0.45\linewidth}{
\centering
\begin{tabular}{c | c | c |}
	 \multicolumn{1}{c}{}	& \multicolumn{1}{c}{0}   &  \multicolumn{1}{c}{1}\\
	 \cline{2-3}
	 0 & $4$  &  $0$\\
	 \cline{2-3}
	 1	& $3$ & $1$ \\
	 \cline{2-3}
	 \end{tabular}}
\hfill
\parbox{0.45\linewidth}{
\centering
\begin{tabular}{c | c | c |}
	 \multicolumn{1}{c}{}	& \multicolumn{1}{c}{0}   &  \multicolumn{1}{c}{1}\\
	 \cline{2-3}
	 0 & $0$  &  $1$\\
	 \cline{2-3}
	 1	& $1$ & $0$ \\
	 \cline{2-3}
	 \end{tabular}
}
	 \caption{Payoff matrices for the
	 $2\times2$ game
	 in example~\ref{ex: No blackboxNE} (left matrix for player $1$ and right matrix for player $2$). 
	 The rows and the columns correspond to the actions of player $1$ and player $2$, respectively, and the entries in the cell represent the corresponding payoffs.
     }\label{tab: 2x2 game}
\end{figure}

Suppose player $1$ and player $2$ play black-box strategies $(1-p, p)$ and $(1-q, q)$, respectively, where $p, q \in [0,1]$.
With an abuse of notation, 
we identify these black-box strategies by $p$ and $q$, respectively.
The corresponding lottery faced by player $1$ is given by
\[
	L_1(p,q) := \{(\mu[0,0], 4); (\mu[1,0], 3); (\mu[1,1], 1); (\mu[0,1], 0)\},
\]
where $\mu[0,0] := (1-p)(1-q), \mu[1,0] := p(1-q), \mu[0,1] := (1-p)q$, and $\mu[1,1] := pq$.
By \eqref{eq: CPT_value_discrete}, the CPT value of the lottery faced by player $1$ is given by
\begin{align*}
	V_1(L_1(p,q)) &:= 4 \times \l[w^+_1(\mu[0,0])\r]\\
	& + 3 \times \l[w^+_1(\mu[0,0] + \mu[1,0]) - w^+_1(\mu[0,0]))\r] \\
	&+ 1 \times \l[w^+_1(\mu[0,0] + \mu[1,0] + \mu[1,1]) - w^+_1(\mu[0,0] + \mu[1,0])\r].
\end{align*}
The plot of the function $V_1(L_1(p,q))$ with respect to $p$, for $q = 0.3$ and $q = 0.35$, is shown in figure~\ref{fig: f_plot}.
We observe that the best response black-box strategy set $\BBR_1(\mu_{-1}(q))$ of player $1$ to player $2$'s black-box strategy $q \in B_2$ satisfies the following: 
$\BBR_1(\mu_{-1}(q)) = \{0\}$ for $q < q^*$, 
$\BBR_1(\mu_{-1}(q)) = \{0, p^*\}$ for $q = q^*$, and 
$\BBR_1(\mu_{-1}(q)) \subset [p^*, 1]$ for $q > q^*$,
where $p^* = 0.996$ and $q^* = 0.340$ (here the numbers are correct to three decimal points).
Further, $\BBR_1(\mu_{-1}(q))$ is singleton for $q \in (q^*, 1]$ and the unique element in $\BBR_1(\mu_{-1}(q))$ increases monotonically with respect to $q$ from $p^*$ to $1$ (see figure~\ref{fig: noBBNE}).
In particular, $\BBR_1(\mu_{-1}(1)) = \{1\}$.
The lottery faced by player $2$ is given by 
\[
	L_2(p,q) := \{(\mu[0,0], 0); (\mu[1,0], 1); (\mu[1,1], 0); (\mu[0,1], 1)\},
\]
and the CPT value of player $2$ for this lottery is given by
$V_2(L_2(p,q)) = p(1-q) + q(1-p)$.
The best response black-box strategy set 
$\BBR_2(\mu_{-2}(p))$
of player $2$ to player $1$'s black-box strategy $p \in B_1$ satisfies the following: 
$\BBR_2(\mu_{-2}(p)) = \{1\}$ for $p < 0.5$, 
$\BBR_2(\mu_{-2}(p)) = [0,1]$ for $p = 0.5$, and 
$\BBR_2(\mu_{-2}(p)) = \{0\}$ for $p > 0.5$.
As a result, see figure~\ref{fig: noBBNE}, there does not exist any $(p', q')$ such that $p' \in \BBR_1(\mu_{-1}(q'))$ and $q' \in \BBR_2(\mu_{-2}(p'))$, and hence no black-box strategy Nash equilibrium exists for this game.

\input{plots/Value_plot_example_noBBNE}


\begin{figure}
\centering
	\includegraphics[scale = 0.3]{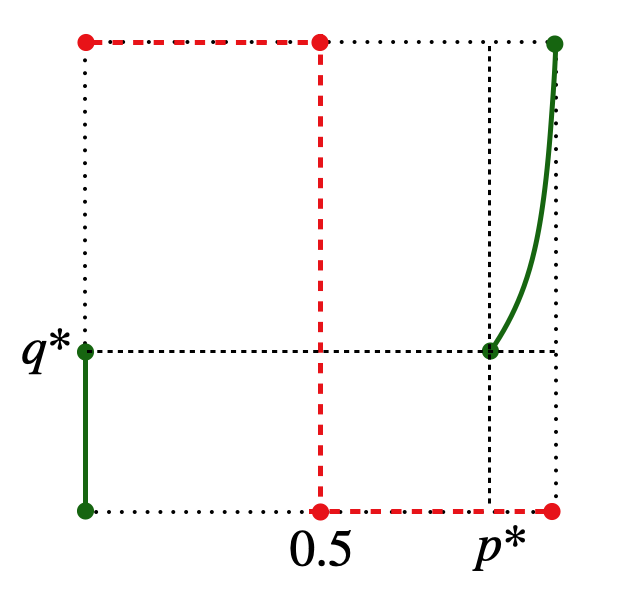}
	\caption{The figure (not to scale) shows the best response black-box strategy sets of the two players for the game in example~\ref{ex: No blackboxNE}. The red (dashed) line shows the best response black-box strategy set of player $2$ in response to the black-box strategy $(1-p, p)$ of player $1$.
	The green (solid) line shows the best response black-box strategy set of player $1$ in response to the black-box strategy $(1-q, q)$ of player $2$. Note that there is no intersection of these lines.}
	\label{fig: noBBNE}
\end{figure}


\qed
\end{example}

Let $\tau_i \in \cal{P}(B_i)$ denote a conjecture over the black-box strategy of player $i$.
This will induce a conjecture $\sigma_i(\tau_i) \in \Delta(A_i)$ over the action of player $i$, given by 
\[
	\sigma_i(\tau_i)[a_i] = \bbE_{\tau_i} b_i[a_i].
\]
Given conjectures over black-box strategies 
$(\tau_j \in \Delta(B_j), j \neq i)$,
let $\sigma_{-i}(\tau_{-i}) := (\sigma_j(\tau_j))_{j \neq i}$. \cob

\begin{definition}
\label{def: mixedblackboxNE}
	A profile of conjectures over black-box strategies $\tau = (\tau_1, \dots, \tau_n)$ is said to be a \emph{mixed black-box strategy Nash equilibrium} 
	if, for each player $i$, 
	we have
	\[
		b_i \in \BBR_i(\mu_{-i}(\sigma_{-i}(\tau_{-i}))), \text{ for all } b_i \in \supp \tau_i.
	\]
\end{definition}

\begin{proposition}
\label{prop: sigma_exist_eq}
For a profile of conjectures $\sigma^* = (\sigma_1^*, \dots, \sigma_n^*)$, consider the condition
\begin{equation}
	\label{eq: sigma_eq_cond}
		\sigma_i^* \in \co(\BBR_i(\mu_{-i}(\sigma_{-i}^*))), \text{ for all } i.
\end{equation}
\begin{enumerate}[(i)]
	\item If $\tau$ is a mixed black-box strategy Nash equilibrium, then the profile of conjectures $\sigma^*$, where $\sigma_i^* = \sigma_i(\tau_i), \forall i$, satisfies \eqref{eq: sigma_eq_cond}.
	\item If $\sigma^*$ satisfies \eqref{eq: sigma_eq_cond}, then there exists a profile of finite support conjectures on black-box strategies $\dot \tau = (\dot \tau_1, \dots, \dot \tau_n)$, where $\dot \tau_i \in \Delta_f(B_i), \forall i$, that is a mixed black-box strategy Nash equilibrium, such that $\sigma_i^* = \sigma_i(\dot \tau_i),\forall i$.
\end{enumerate}
	
	
\end{proposition}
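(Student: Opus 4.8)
The plan is to exploit two structural facts. First, the map $\tau_i \mapsto \sigma_i(\tau_i)$ sends a conjecture over black-box strategies to its \emph{barycenter} in $\Delta(A_i)$: since $\sigma_i(\tau_i)[a_i] = \bbE_{\tau_i} b_i[a_i]$, we have $\sigma_i(\tau_i) = \int_{B_i} b_i \, d\tau_i(b_i)$, the mean of $\tau_i$ viewed as a measure on the simplex $B_i = \Delta(A_i)$. Second, the induced belief $\mu_{-i}(\sigma_{-i}(\tau_{-i}))$ depends on $\tau_{-i}$ only through the profile $\sigma_{-i}(\tau_{-i}) = (\sigma_j(\tau_j))_{j \neq i}$. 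Together these let me move between conjectures over black-box strategies and the induced conjectures over actions without disturbing the relevant best-response sets $\BBR_i$.

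For part (i), suppose $\tau$ is a mixed black-box strategy Nash equilibrium and set $\sigma_i^* = \sigma_i(\tau_i)$. Then $\sigma_{-i}^* = \sigma_{-i}(\tau_{-i})$, so $\mu_{-i}(\sigma_{-i}^*) = \mu_{-i}(\sigma_{-i}(\tau_{-i}))$, and Definition~\ref{def: mixedblackboxNE} gives $\supp \tau_i \subseteq \BBR_i(\mu_{-i}(\sigma_{-i}^*))$. By Lemma~\ref{lem: BBR_closed} this set is a compact (closed) subset of $B_i$, so its closed convex hull $\co(\BBR_i(\mu_{-i}(\sigma_{-i}^*)))$ is compact. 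Since $\sigma_i^*$ is the barycenter of a probability measure supported in this compact set, the standard barycenter argument applies: every closed half-space containing the support also contains the mean, hence $\sigma_i^*$ lies in the intersection of all such half-spaces, which is the closed convex hull. This is exactly condition \eqref{eq: sigma_eq_cond}.

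For part (ii), suppose $\sigma^*$ satisfies \eqref{eq: sigma_eq_cond}. By Lemma~\ref{lem: BBR_closed}, $\co(\BBR_i(\mu_{-i}(\sigma_{-i}^*))) = co(\BBR_i(\mu_{-i}(\sigma_{-i}^*)))$, so $\sigma_i^*$ lies in the ordinary convex hull of $\BBR_i(\mu_{-i}(\sigma_{-i}^*))$. Applying Carathéodory's theorem in the finite-dimensional simplex $\Delta(A_i)$, I write $\sigma_i^* = \sum_k \lambda_k b_i^{(k)}$ as a finite convex combination (at most $|A_i|$ terms) with each $b_i^{(k)} \in \BBR_i(\mu_{-i}(\sigma_{-i}^*))$. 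Define $\dot\tau_i \in \Delta_f(B_i)$ to place mass $\lambda_k$ on $b_i^{(k)}$. Then $\sigma_i(\dot\tau_i)[a_i] = \sum_k \lambda_k b_i^{(k)}[a_i] = \sigma_i^*[a_i]$, so $\sigma_i(\dot\tau_i) = \sigma_i^*$, whence $\mu_{-i}(\sigma_{-i}(\dot\tau_{-i})) = \mu_{-i}(\sigma_{-i}^*)$. Since $\supp \dot\tau_i \subseteq \{b_i^{(k)}\} \subseteq \BBR_i(\mu_{-i}(\sigma_{-i}^*)) = \BBR_i(\mu_{-i}(\sigma_{-i}(\dot\tau_{-i})))$, the profile $\dot\tau$ meets Definition~\ref{def: mixedblackboxNE} and is the desired finite-support equilibrium.

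The step requiring the most care is recognizing that the belief $\mu_{-i}$ stays invariant in both directions because it is determined solely by the barycenters $\sigma_j^*$; this is what dissolves the apparent circularity in \eqref{eq: sigma_eq_cond}, where the set $\BBR_i(\mu_{-i}(\sigma_{-i}^*))$ itself depends on $\sigma^*$. The only genuine analytic inputs are the compactness and closedness supplied by Lemma~\ref{lem: BBR_closed} (guaranteeing $\co = co$ and that the barycenter remains inside the hull) and finite-dimensional Carathéodory (replacing a general conjecture by a finitely supported one); everything else is a routine verification.
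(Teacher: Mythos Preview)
Your proposal is correct and follows essentially the same approach as the paper: for (i) you use that $\supp\tau_i\subseteq\BBR_i(\mu_{-i}(\sigma_{-i}^*))$ and that the barycenter of a measure supported in a compact set lies in its closed convex hull, and for (ii) you invoke Lemma~\ref{lem: BBR_closed} to pass from $\co$ to $co$ and then Carath\'eodory's theorem to extract a finite-support conjecture. The paper's own proof is a terse version of exactly this argument; your write-up simply makes explicit the barycenter step in (i) and the verification that $\sigma_i(\dot\tau_i)=\sigma_i^*$ and $\supp\dot\tau_i\subseteq\BBR_i(\mu_{-i}(\sigma_{-i}(\dot\tau_{-i})))$ in (ii), which the paper leaves implicit.
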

\begin{proof}
	Suppose $\tau$ is a mixed black-box strategy Nash equilibrium.
	Let $\sigma^*_i = \sigma_i(\tau_i)$.
	Then, for all $b_i \in \supp \tau_i$, we have $b_i \in \BBR_i(\mu_{-i}(\sigma^*_{-i}))$,
	and hence $\sigma^*_i \in \co(\BBR_i(\mu_{-i}(\sigma^*_{-i})))$.
	This proves statement (i).

	For statement (ii), suppose $\sigma^*$ satisfies condition \eqref{eq: sigma_eq_cond}. 
	In fact, by lemma~\ref{lem: BBR_closed} we have, $\sigma^*_i \in co(\BBR_i(\mu_{-i}(\sigma^*_{-i}))) \subset \Delta(A_i)$, and by Caratheodory's theorem, $\sigma^*_i$ is a convex combination of at most $|A_i|$ elements in $\BBR_i(\mu_{-i}(\sigma^*_{-i}))$.
	Hence, we can construct a mixed black-box strategy Nash equilibrium $\dot \tau$ such that $\dot \tau_i \in \Delta_f(B_i)$ and $\sigma_i^* = \sigma_i(\dot \tau_i), \forall i$.
\end{proof}

The content of this proposition is that in order to determine whether a
profile $\tau$ of conjectures on black box strategies is a mixed black-box strategy Nash equilibrium or not it suffices to study the associated profile of
conjectures on actions that is induced by $\tau$. This justifies the study of the set $\mBBNE$ discussed below.

\begin{theorem}
\label{thm: BBNEexists}
	For any game $\Gamma$, there exists a profile of conjectures $\sigma^* = (\sigma_1^*, \dots, \sigma_n^*)$ that satisfies \eqref{eq: sigma_eq_cond}.
\end{theorem}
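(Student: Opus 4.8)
The plan is to realize condition \eqref{eq: sigma_eq_cond} as a fixed point of a set-valued map on the product of simplices and then invoke Kakutani's fixed-point theorem. Recall that $S = \prod_i S_i$ with $S_i = \Delta(A_i)$ is a non-empty, compact, convex subset of a Euclidean space. For each player $i$ I would define the correspondence $\Phi_i : S \to 2^{S_i}$ by
$$
\Phi_i(\sigma) := \co\l(\BBR_i(\mu_{-i}(\sigma_{-i}))\r),
$$
and set $\Phi(\sigma) := \prod_i \Phi_i(\sigma) \subseteq S$. A fixed point $\sigma^* \in \Phi(\sigma^*)$ is precisely a profile satisfying \eqref{eq: sigma_eq_cond}, so it suffices to verify the hypotheses of Kakutani's theorem for $\Phi$.

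First I would check the pointwise structural conditions. By lemma~\ref{lem: BBR_closed}, for every belief the set $\BBR_i(\mu_{-i})$ is non-empty and its convex hull is closed; hence $\Phi_i(\sigma)$ is a non-empty, compact, convex subset of $S_i$, and so $\Phi(\sigma)$ is non-empty, compact and convex for every $\sigma$. The remaining, and main, hypothesis is that $\Phi$ has a closed graph (equivalently, is upper hemicontinuous with closed values).

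For this step the route is through Berge's Maximum Theorem followed by a convexification argument. The map $\sigma_{-i} \mapsto \mu_{-i}(\sigma_{-i})$ is continuous, being a product of coordinate distributions, and $(b_i,\mu_{-i}) \mapsto \mu(b_i,\mu_{-i})$ is continuous as well; since $V_i(p,z)$ is continuous in the probability vector $p$ (lemma~\ref{lem: BBR_closed}, citing \citet{keskin2016equilibrium}), the objective
$$
(b_i,\sigma) \mapsto V_i\l(\l\{\l(\mu(b_i,\mu_{-i}(\sigma_{-i}))[a], x_i(a)\r)\r\}_{a \in A}\r)
$$
is jointly continuous on the compact set $B_i \times S$, with fixed feasible set $B_i$. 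Berge's theorem then gives that $\sigma \mapsto \BBR_i(\mu_{-i}(\sigma_{-i}))$ is upper hemicontinuous with non-empty compact values. The delicate point is transferring this to the convexified map $\Phi_i$, and here I would reuse the Carathéodory argument from proposition~\ref{prop: sigma_exist_eq}: given $\sigma^{(k)} \to \sigma^*$ and $\rho^{(k)} \in \Phi_i(\sigma^{(k)})$ with $\rho^{(k)} \to \rho^*$, write each $\rho^{(k)} = \sum_{l=1}^{|A_i|} \lambda_l^{(k)} b_{i,l}^{(k)}$ with $b_{i,l}^{(k)} \in \BBR_i(\mu_{-i}(\sigma_{-i}^{(k)}))$ and weights in the simplex, pass to a subsequence so that $\lambda^{(k)} \to \lambda^*$ and $b_{i,l}^{(k)} \to b_{i,l}^*$, and invoke upper hemicontinuity of $\BBR_i$ to get $b_{i,l}^* \in \BBR_i(\mu_{-i}(\sigma_{-i}^*))$; then $\rho^* = \sum_l \lambda_l^* b_{i,l}^* \in \Phi_i(\sigma^*)$. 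This establishes the closed graph of each $\Phi_i$, hence of $\Phi$.

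With non-emptiness, convexity and compactness of values together with the closed-graph property in hand, Kakutani's fixed-point theorem furnishes $\sigma^* \in \Phi(\sigma^*)$, i.e. a profile satisfying \eqref{eq: sigma_eq_cond}, completing the proof. I expect the continuity and compactness bookkeeping to be routine; the genuine obstacle is the upper hemicontinuity of the convexified best-response map $\Phi_i$, which is exactly where compactness of $S_i$ and the Carathéodory truncation to at most $|A_i|$ extreme responses do the essential work.
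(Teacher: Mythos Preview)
Your proposal is correct and follows the same overall architecture as the paper (Kakutani applied to the convexified best-response correspondence $\Phi = \prod_i \co(\BBR_i(\mu_{-i}(\sigma_{-i})))$, with lemma~\ref{lem: BBR_closed} supplying non-emptiness and convexity of values), but the closed-graph step is handled differently. You invoke Berge's maximum theorem on the original CPT objective to obtain upper hemicontinuity of $\BBR_i$, and then lift this to $\Phi_i = co(\BBR_i)$ via the Carath\'eodory decomposition and a subsequence/compactness argument. The paper instead introduces the concavified value function $\tilde V_i(s_i,\sigma_{-i}) := \sup_{\tau_i:\,\bbE_{\tau_i} b_i = s_i} \bbE_{\tau_i} V_i(\cdot)$, proves in a separate appendix that the concave hull of a jointly continuous function on a product of simplices is itself jointly continuous, and then identifies $\co(\BBR_i(\mu_{-i}(\sigma_{-i})))$ with $\argmax_{s_i} \tilde V_i(s_i,\sigma_{-i})$, from which the closed graph follows by elementary continuity. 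Your route is more direct and avoids the appendix on concave hulls; the paper's route, on the other hand, yields as a by-product the explicit characterization of $\Phi_i$ as the argmax set of a continuous auxiliary utility, which may be of independent interest.
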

\begin{proof}
The idea is to use the Kakutani fixed-point theorem, as in the proof of the existence of mixed action Nash equilibrium \citep{nash1950equilibrium}. 
Assume the usual topology on $S_i$, for each $i$, and let $S$ have the corresponding product topology.
The set $S$ is a non-empty compact convex subset of the Euclidean space $\prod_{i} \bbR^{|A_i|}$.
Let $K(\sigma)$ be the set-valued function given by
\[
	K(\sigma) := \prod_i \co(\BBR_i(\mu_{-i}(\sigma_{-i}))),
\]
for all $\sigma \in S$.
Since $\co(\BBR_i(\mu_{-i}(\sigma_{-i})))$ is non-empty and convex for each $i$ (lemma~\ref{lem: BBR_closed}), 
the function $K(\sigma)$ is non-empty and convex for any $\sigma \in S$. 
We now show that the function $K(\cdot)$ has a closed graph. 
Let $\{\sigma^t\}_{t=1}^\infty$ and $\{s^t\}_{t=1}^\infty$ be 
two sequences in $S$ that converge to
$\bar \sigma$ and $\bar s$, respectively, 
and let $s^t \in K(\sigma^t)$ for all $t$. 
It is enough to show that $\bar s \in K(\bar \sigma)$. 
For all 
$s_i \in S_i, \sigma_{-i} \in S_{-i}$, 
let 
\begin{equation*}
	\tilde V_i(s_i,\sigma_{-i}) := \sup_{\substack{\tau_i \in \cal{P}(B_i),\\ 
	\bbE_{\tau_i} b_i = s_i
	}} 
	\bbE_{\tau_i}
	V_i\l(\{(\mu(b_i,\mu_{-i}(\sigma_{-i}))[a],x_i(a))\}_{a \in A}\r).
\end{equation*}
Since the product distribution $\mu(b_i,\mu_{-i}(\sigma_{-i}))$ is jointly continuous in $b_i$ and $\sigma_{-i}$, 
and, as noted earlier, $V_i(p,z)$ is continuous with respect to the probability vector $p$, for any fixed outcome profile $z$, the function $V_i\l(\{\mu(b_i,\mu_{-i}(\sigma_{-i}))[a],x_i(a)\}_{a \in A}\r)$ is jointly continuous in $b_i$ and $\sigma_{-i}$. 
This implies that the function 
$\tilde V_i(s_i,\sigma_{-i})$ 
is jointly continuous in 
$s_i$
and $\sigma_{-i}$ (see Appendix \ref{app: continuity}).
From the definition of $\tilde V_i$, it follows that 
$$\max_{s_i \in \Delta(A_i)} \tilde V_i(s_i, \sigma_{-i}) =  \max_{b_i \in B_i}  V_i\l(\{(\mu(b_i,\mu_{-i}(\sigma_{-i}))[a],x_i(a))\}_{a \in A}\r).$$
Indeed, the maximum on the left-hand side is well-defined since $\Delta(A_i)$ is a compact space and $\tilde V_i(\cdot, \sigma_{-i})$ is a continuous function.
The maximum on the right-hand side is well-defined and the maximum is achieved by all $b_i \in \BBR_i(\mu_{-i}(\sigma_{-i}))$ (lemma~\ref{lem: BBR_closed}).
Hence,
$$\argmax_{s_i \in \Delta(A_i)} \tilde V_i(s_i,\sigma_{-i}) = \co(\BBR_i(\mu_{-i}(\sigma_{-i}))).$$
Since $s_i^t \in \co(\BBR_i(\mu_{-i}(\bar \sigma_{-i}^t)))$, for all $t$, we have
$$\tilde V_i(s_i^t,\sigma^t_{-i}) \geq \tilde V_i(\tilde s_i,\sigma^t_{-i}),~~\mbox{for all $\tilde s_i \in S_i$.}$$
Since 
$\tilde V_i(s_i,\sigma_{-i})$ 
is jointly continuous in 
$s_i$ 
and $\sigma_{-i}$, we get 
$$\tilde V_i(\bar s_i, \bar \sigma_{-i}) \geq \tilde V_i(\tilde s_i,\bar \sigma_{-i}),~~\mbox{for all $\tilde s_i \in S_i$.}$$
Hence we have $\bar s_i \in \co(\BBR_i(\mu_{-i}(\bar \sigma_{-i}^t)))$.
This shows that the function $K(\cdot)$ has a closed graph.
By the Kakutani fixed-point theorem, there exists $\sigma^*$ such that $ \sigma^* \in K(\sigma^*)$, i.e. $\sigma^*$ satisfies condition \eqref{eq: sigma_eq_cond} \citep{kakutani1941generalization}.
This completes the proof.
\end{proof}

\begin{corollary}
	For any finite game $\Gamma$, there exists a mixed black-box strategy Nash equilibrium.
	In particular, there is one that is a profile of finite support conjectures over the black-box strategies of players.
\end{corollary}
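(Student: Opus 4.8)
The plan is to obtain the corollary as an immediate consequence of Theorem~\ref{thm: BBNEexists} together with part (ii) of Proposition~\ref{prop: sigma_exist_eq}, since between them these two results already package all of the substantive content. First I would apply Theorem~\ref{thm: BBNEexists} to the finite game $\Gamma$ to produce a profile of conjectures $\sigma^* = (\sigma_1^*, \dots, \sigma_n^*)$ satisfying the fixed-point condition~\eqref{eq: sigma_eq_cond}, namely $\sigma_i^* \in \co(\BBR_i(\mu_{-i}(\sigma_{-i}^*)))$ for every player $i$.

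Next I would feed this $\sigma^*$ into Proposition~\ref{prop: sigma_exist_eq}(ii). That statement guarantees the existence of a profile $\dot\tau = (\dot\tau_1, \dots, \dot\tau_n)$ of conjectures over black-box strategies, with each $\dot\tau_i \in \Delta_f(B_i)$ of finite support, which is a mixed black-box strategy Nash equilibrium in the sense of Definition~\ref{def: mixedblackboxNE} and moreover satisfies $\sigma_i^* = \sigma_i(\dot\tau_i)$ for all $i$. This directly yields both assertions of the corollary: the existence of a mixed black-box strategy Nash equilibrium, and the sharper ``in particular'' claim that one can be chosen as a profile of finite support conjectures.

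There is no genuine obstacle remaining at the level of the corollary; the difficulty has been discharged earlier. The existence of a $\sigma^*$ solving~\eqref{eq: sigma_eq_cond} rests on the Kakutani fixed-point argument of Theorem~\ref{thm: BBNEexists}, whose delicate point is the joint continuity of the auxiliary value $\tilde V_i(s_i,\sigma_{-i})$ together with the identification $\argmax_{s_i} \tilde V_i(s_i,\sigma_{-i}) = \co(\BBR_i(\mu_{-i}(\sigma_{-i})))$. The passage from the fixed point to an honest finite-support equilibrium is the Caratheodory reduction carried out inside Proposition~\ref{prop: sigma_exist_eq}(ii), which uses that $\co(\BBR_i(\mu_{-i}(\sigma_{-i}^*)))$ is closed (Lemma~\ref{lem: BBR_closed}), so that $\sigma_i^*$ can be written as a convex combination of at most $|A_i|$ best-response black-box strategies. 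The only point I would flag for care is that finiteness of $\Gamma$ is exactly what licenses this Caratheodory step, since it bounds the number of atoms of each $\dot\tau_i$ by $|A_i|$; with that in hand the corollary follows without further work.
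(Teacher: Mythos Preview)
Your proposal is correct and matches the paper's own proof essentially verbatim: the paper simply writes ``Follows from theorem~\ref{thm: BBNEexists} and statement (ii) of proposition~\ref{prop: sigma_exist_eq},'' which is exactly the two-step chain you describe.
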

\begin{proof}
	Follows from theorem~\ref{thm: BBNEexists} and statement (ii) of proposition~\ref{prop: sigma_exist_eq}.
\end{proof}



We now compare the different notions of Nash 
equilibrium
defined above.
To that end, we will associate each of the equilibrium notions with their corresponding natural profile of mixtures over actions. 
For example, corresponding to any pure Nash equilibrium $a = (a_1, \dots, a_n)$, assign the profile of mixtures over actions $(\1\{a_1\}, \dots, \1\{a_n\}) \in S$.
Let $\pNE \subset S$ denote the set of all profiles of mixtures over actions that correspond to pure Nash equilibria.
Let $\mNE \subset S$ denote the set of all mixed action Nash equilibria $\sigma \in S$.
Let $\BBNE \subset S$ denote the set of all black-box strategy Nash equilibria $b \in S$.
Corresponding to any mixed black-box strategy Nash equilibrium $\tau = (\tau_1, \dots, \tau_n)$, assign the profiles of mixtures over actions $(\sigma_1(\tau_1), \dots, \sigma_n(\tau_n)) \in S$, and let $\mBBNE \subset S$ denote the set of all such profiles.
Note that each of the above subsets 
depends
on the underlying game $\Gamma$ and the CPT features of the players.

\begin{proposition}
\label{prop: compare}
	For any fixed game $\Gamma$ and CPT features of the players, we have
\begin{enumerate}[(i)]
 	\item $\pNE \subset \mNE,$
 	\item $\pNE \subset \BBNE,$ and
 	\item $\BBNE \subset \mBBNE.$
 \end{enumerate} 
\end{proposition}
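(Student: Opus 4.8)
The plan is to prove the three inclusions independently, disposing of (i) and (iii) by exhibiting the obvious degenerate object and reducing the equilibrium condition to the one already assumed, and treating (ii) by a monotonicity argument; I expect (ii) to be the only part requiring genuine content, with all the subtlety concentrated there.

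For (i), I would start from a pure Nash equilibrium $a = (a_1,\dots,a_n)$ and its associated profile $\sigma = (\1\{a_1\},\dots,\1\{a_n\})$. The first observation is that a product of point masses is a point mass, so $\mu_{-i}(\sigma_{-i}) = \1\{a_{-i}\}$. Since $\supp \sigma_i = \{a_i\}$, the defining condition of a mixed action Nash equilibrium (Definition~\ref{def: mixedNE}) collapses to the single requirement $a_i \in \BAR_i(\1\{a_{-i}\})$; by the consistency remark following Definition~\ref{def: BAR} this is exactly $a_i \in \BAR_i(a_{-i})$, which holds because $a$ is a pure Nash equilibrium. Thus $\sigma \in \mNE$. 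For (iii), I would begin from a black-box strategy Nash equilibrium $b = (b_1,\dots,b_n)$ and propose the profile of degenerate conjectures $\tau_i := \1\{b_i\} \in \Delta_f(B_i)$. Then $\sigma_i(\tau_i)[a_i] = \bbE_{\tau_i} b_i[a_i] = b_i[a_i]$, so $\sigma_i(\tau_i) = b_i$ and hence $\mu_{-i}(\sigma_{-i}(\tau_{-i})) = \mu_{-i}(b_{-i})$. Because $\supp \tau_i = \{b_i\}$, the mixed black-box equilibrium condition (Definition~\ref{def: mixedblackboxNE}) reduces to $b_i \in \BBR_i(\mu_{-i}(b_{-i}))$, which is precisely the black-box equilibrium condition. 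Consequently $\tau$ is a mixed black-box strategy Nash equilibrium with $\sigma_i(\tau_i) = b_i$, so the associated profile of mixtures $b$ lies in $\mBBNE$.

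The substantive inclusion is (ii). Given a pure Nash equilibrium $a$, I would consider the black-box profile $b = (\1\{a_1\},\dots,\1\{a_n\})$ and again use $\mu_{-i}(b_{-i}) = \1\{a_{-i}\}$. If player $i$ plays a black-box strategy $b_i' \in B_i$ she faces the lottery $\{(b_i'[a_i'], x_i(a_i', a_{-i}))\}_{a_i' \in A_i}$, all of whose outcomes are bounded above by $M := x_i(a_i, a_{-i})$, since $a_i \in \BAR_i(a_{-i})$ makes $a_i$ a payoff-maximizing pure action against $a_{-i}$. The key step — and the main obstacle — is that one cannot argue via linearity of $V_i$ in the mixture, precisely because betweenness fails (this is the whole thrust of the paper); instead I would invoke monotonicity of the CPT functional with respect to first-order stochastic dominance, which is a consequence of its rank-dependent/cumulative structure. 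The degenerate lottery $\1\{M\}$ first-order stochastically dominates every lottery supported in $(-\infty, M]$, so $V_i(\{(b_i'[a_i'], x_i(a_i', a_{-i}))\}_{a_i' \in A_i}) \le V_i(\1\{M\}) = v(M)$ for every $b_i'$, with equality attained at $b_i' = \1\{a_i\}$. Hence $\1\{a_i\} \in \BBR_i(\mu_{-i}(b_{-i}))$ for each $i$, giving $b \in \BBNE$.

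To close the gap I would make the monotonicity claim precise, either by citing the standard first-order-stochastic-dominance monotonicity of cumulative prospect theory (e.g.\ \citet{wakker2010prospect}), or, if a self-contained argument is preferred, by a short direct computation from the cumulative form \eqref{eq: CPT_value_cumulative}: fixing the outcome ordering and comparing the cumulative decision weights attached to $\{(b_i'[a_i'], x_i(a_i', a_{-i}))\}_{a_i'}$ against those of $\1\{M\}$ shows term by term that raising all outcomes to $M$ cannot decrease the value. This is the one place where the rank-dependent structure is essential and where a naive argument would fail, so it is where I would spend the care; the remaining manipulations in all three parts are routine bookkeeping with the definitions.
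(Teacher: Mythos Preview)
Your proposal is correct and follows essentially the same approach as the paper: parts (i) and (iii) are handled by the same degenerate-conjecture reductions (the paper simply cites \citet{keskin2016equilibrium} for (i) and says (iii) ``follows directly from the definitions''), and for (ii) the paper likewise reduces to the lottery $\{(b_i'[a_i'], x_i(a_i', a_{-i}))\}_{a_i' \in A_i}$ and invokes the same monotonicity fact you frame as first-order stochastic dominance, observing directly from \eqref{eq: CPT_value_discrete} that the CPT value is maximized precisely when $b_i'$ puts no mass on suboptimal outcomes. The only cosmetic difference is that the paper states the slightly stronger conclusion $\BBR_i(\1\{a_{-i}\}) = \co\{\1\{a_i'\} : a_i' \in \BAR_i(a_{-i})\}$, whereas you stop at the membership $\1\{a_i\} \in \BBR_i(\1\{a_{-i}\})$ that suffices for the inclusion.
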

\begin{proof}
	The proof of statement (i) can be found in \citet{keskin2016equilibrium}.

	For statement (ii), let $(\1\{a_1\}, \dots, \1\{a_n\}) \in \pNE$.
	For a black-box strategy $b_i$ of player $i$, the belief $\mu_{-i} = \1\{a_{-i}\}$ of player $i$ gives rise to the lottery 
	$\{(b_i[a_i'], x_i(a_i', a_{-i}))\}_{a_i' \in A_i}$.
	From the definition of CPT value (see equation~\eqref{eq: CPT_value_discrete}), we observe that $V_i(\{(b_i[a_i'], x_i(a_i', a_{-i}))\}_{a_i' \in A_i})$ is optimal as long as the probability distribution $b_i$ does not assign positive probability to any suboptimal outcome.
	Hence,
	\[
		\BBR_i(\1\{a_{-i}\}) = \co(\1\{a_i'\} \in S_i : a_i' \in \BAR_i(\1\{a_{-i}\})). 
	\]
	In particular, $\1\{a_i\} \in \BBR_i(\1\{a_{-i}\})$, and hence $(\1\{a_1\}, \dots, \1\{a_n\}) \in \BBNE$.

	Statement (iii) follows directly from the definitions~\ref{def: blackboxNE} and \ref{def: mixedblackboxNE}.
\end{proof}

\begin{figure}
\centering
    \begin{tikzpicture}
    \draw[thick] (-5,-4) rectangle (5,4) node [label={[shift={(-9,-1)}]$S$}] {};
      \node (A) at (-0.6,3) {$\pNE$};
      \draw[very thick] (0,0) circle (0.7) node (B) [shift={(-0.1,0.6)}] {};
      \draw[thick,->] (A) -- (B);
      \draw[very thick] (-1.2,-0.3) circle (2.2) node [label={[shift={(-1.5,1.8)}]$\mNE$}] {};
      \node (C) at (1.6,3.5) {$\BBNE$};
      \draw[very thick] (0.6,-0.3) circle (1.8) node (D) [shift={(0.6,1.7)}] {};
      \draw[thick,->] (C) -- (D);
      \node (C) at (1.6,3.5) {$\BBNE$};
      \draw[very thick] (1,-0.2) circle (3) node [label={[shift={(3.1,1.8)}]$\mBBNE$}] {};
      \draw (0,0) circle (0.2) node [text=black] {a};
      \draw (-2.7,-0.2) circle (0.2) node [text=black] {b};
      \draw (-0.2,-1.2) circle (0.2) node [text=black] {c};
      \draw (-1.2,1) circle (0.2) node [text=black] {d};
      \draw (1.6,-0.3) circle (0.2) node [text=black] {e};
      \draw (3,0.7) circle (0.2) node [text=black] {f};
      \draw (-4,0.7) circle (0.2) node [text=black] {g};
    \end{tikzpicture}
    \caption{Venn diagram depicting the different notions of 
    equilibrium
    as subsets of the set $S = \prod_i \Delta(A_i)$.
    The sets marked $\pNE, \mNE, \BBNE$, and $\mBBNE$ represent the sets of pure Nash equilibria, mixed action Nash equilibria, black-box strategy Nash equilibria, and mixed black-box strategy Nash equilibria, respectively.
    Examples are given in the body of the text of CPT games lying in each of the indicated regions (a) through (g).
    }
    \label{fig: venn_diagram}
\end{figure}
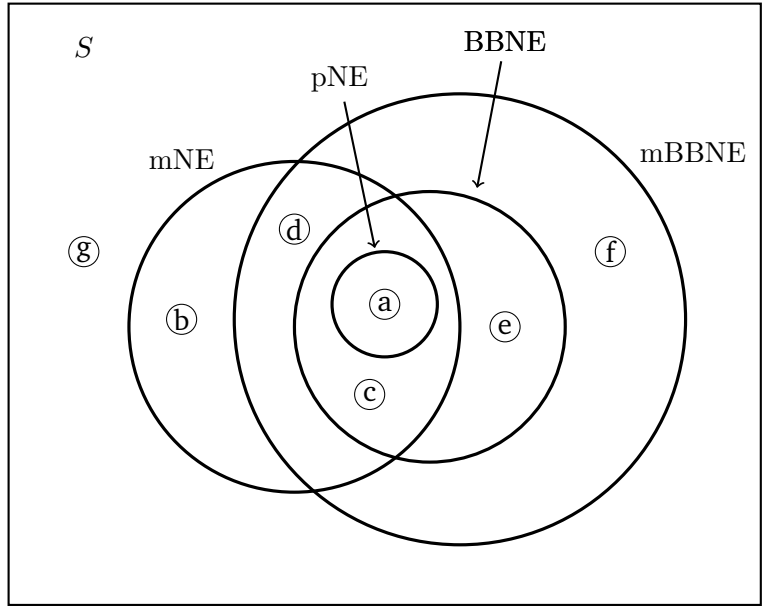

In the following, we show via examples
that each of the labeled regions ((a)--(g)), in figure~\ref{fig: venn_diagram}, is non-empty in general.

\begin{example}
\label{ex: non-empty_regions}
For each of the seven regions in figure~\ref{fig: venn_diagram}, we provide a $2 \times 2$ game with the accompanying CPT features for the two players 
verifying that the corresponding region is non-empty.
Let the action sets be $A_1 = A_2 = \{0, 1\}$.
With an abuse of notation, let $p, q \in [0,1]$ denote the mixtures over actions for players $1$ and $2$, respectively, where $p$ and $q$ are the probabilities corresponding to action $1$ for both the players.
Thus, the set of all profiles of mixtures over actions is $S = \{(p,q) : p, q \in [0,1]\}$.
Let $L_1(p,q)$ and $L_2(p,q)$ denote the corresponding lotteries faced by the two players.
(All decimal numbers in 
these examples
are correct to three decimal places.)

\begin{enumerate}[(a)]
	\item Let both the players have EUT preferences with their utility functions given by the identity functions $u_i(x) = x$, for $i = 1, 2$.
	Let the payoff matrix be as shown in figure~\ref{subfig: a}.
	Clearly, $(p=0, q=0) \in \pNE$.

	\item Let $r_i = 0, v_i(x) = x$, for $i = 1, 2$.
	Let
	$w_1^{+}(p) = p^{0.5}$ and $w_2^+(p) = p$, for $p \in [0,1]$.
	Let the payoff matrix be as shown in figure~\ref{subfig: b}, where $\beta := 1/w_1^+(0.5) = 1.414$.
	We have
	\[
		L_1(p,q) = \{((1-p)(1-q), 2\beta); (p(1-q), \beta + 1); (pq, 1); ((1-p)q, 0)\}.
	\]
	The way $\beta$ is defined, we get $V_1(L_1(0, 0.5)) = V_1(L_1(1,0.5)) = 2$.
	Also, observe that $V_2(L_2(0.5,0)) = V_2(L_2(0.5, q))  = V_2(L_2(0.5,1)), \forall q \in [0,1].$
	With these observations, we get that $(0.5, 0.5) \in \mNE$.
	We have, $\argmax_{p \in [0,1]} V_1(L_1(p, 0.5)) = \{p'\}$, where $p' = 0.707$ 
	(see figure~\ref{plot: region_b}).
	Hence $0.5 \notin \co(\BBR_1(\mu_{-1}(0.5)))$ and $(0.5, 0.5) \notin \mBBNE$.

	\item Let the CPT features for both the players be as in (b).
	Let the payoff matrix be as shown in figure~\ref{subfig: c}, where $\beta := 1/w_1^+(0.5) = 1.414$ and $\gamma = (1-p')/p'$ (here $p' = 0.707$ as in (b)).
	As observed in (b), $\BBR_1(\mu_{-1}(0.5)) = \{p'\}$.
	From the definition of $\gamma$, we see that player $2$ is indifferent between her two actions, given her belief $p'$ over player $1$'s actions.
	Thus $(p', 0.5) \in (\mNE \cap \BBNE) \back \pNE$.

	\item Let $r_i = 0, v_i(x) = x$, for $i =1,2$.
	Let
	$w_1^{-}(p) = p^{0.5}, w_2^+(p) = p$.
	Let the payoff matrix be as shown in figure~\ref{subfig: d}, where $\beta := 1/w_1^-(0.5) = 1.414$.
	Note that the payoffs for player $1$ are negations of her payoffs in (b), and her probability weighing function for losses is same as her probability weighing function for gains in (b).
	Thus her CPT value function $V_1(L_1(p,q))$ is the negation of her CPT value function in (b).
	In particular, we have $V_1(L_1(0,0.5)) = V_1(L_1(1,0.5)) > V_1(L_1(p,0.5))$ for all $p \in (0,1)$.
	Thus, $0.5 \in \co(\BBR_1(\mu_{-1}(0.5)))$, but $0.5 \notin \BBR_1(\mu_{-1}(0.5))$.
	The payoffs and CPT features of player $2$ are same as in (b).
	Thus, $(0.5,0.5) \in (\mNE \cap \mBBNE) \back \BBNE$.

	\item Let the CPT features for both the players be as in (b).
	Let the payoff matrix be as shown in figure~\ref{subfig: e}, where $\beta := 1/w_1^+(0.5) = 1.414, \epsilon = 0.1$, and $\gamma := (1-\tilde p)/\tilde p$; here $\tilde p = 0.582$ is the unique maximizer of $V_1(L_1(p, 0.5))$ (see figure~\ref{plot: region_e}).
	We have $V_1(L_1(0,0.5)) = 2.071 > 2 = V_1(L_1(1, 0.5))$ and $\argmax_p V_1(L_1(p, 0.5)) = \{\tilde p\}$ with $V_1(L_1(\tilde p, 0.5)) = 2.125$.
	From the definition of $\gamma$, we see that player $2$ is indifferent between her two actions, given her belief $\tilde p$ over player $1$'s actions.
	Thus, $(\tilde p, 0.5) \in \BBNE \back \mNE$.

	\item Let the CPT features be as in example~\ref{ex: No blackboxNE}.
	Let $p^* = 0.996$ and $q^* = 0.340$ be the same as in example~\ref{ex: No blackboxNE}.
	Let the payoff matrix be as shown in figure~\ref{subfig: f}.
	Note that the payoffs for both the players are the same as in example~\ref{ex: No blackboxNE}.
	Recall 
	$\BBR_1(\mu_{-1}(q)) = 0$ for $q < q^*$, 
$\BBR_1(\mu_{-1}(q)) = \{0, p^*\}$ for $q = q^*$, and 
$\BBR_1(\mu_{-1}(q)) \subset [p^*, 1]$ for $q > q^*$,
 and hence $0.5 \in \co(\BBR_1(\mu_{-1}(q^*)))$ and $0.5 \notin \BBR_1(\mu_{-1}(q^*))$.
Further, from the definition of $\gamma$, we have $V_2(L_2(0.5,0)) = V_2(L_2(0.5,q)) = V_2(L_2(0.5, 1)), \forall q \in [0,1]$.
Hence, $(0.5, q^*) \in \mBBNE \back (\mNE \cap \BBNE)$.

\item Finally, if we let the players have EUT preferences and the payoffs as in (a), then $(1,0) \notin (\mNE \cup \mBBNE)$.
\end{enumerate}
\begin{figure}
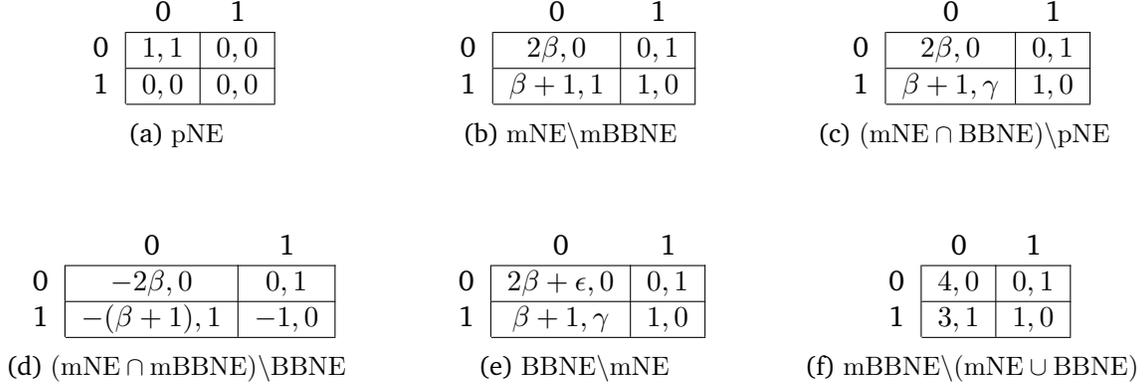

\parbox{0.3\linewidth}{
\centering
\begin{tabular}{c | c | c |}
	 \multicolumn{1}{c}{}	& \multicolumn{1}{c}{0}   &  \multicolumn{1}{c}{1}\\
	 \cline{2-3}
	 0 & $1,1$  &  $0,0$\\
	 \cline{2-3}
	 1	& $0,0$ & $0,0$ \\
	 \cline{2-3}
	 \end{tabular}
	 \subcaption{$\pNE$}
	 \label{subfig: a}
	 }
\hfill
\parbox{0.3\linewidth}{
\centering
\begin{tabular}{c | c | c |}
	 \multicolumn{1}{c}{}	& \multicolumn{1}{c}{0}   &  \multicolumn{1}{c}{1}\\
	 \cline{2-3}
	 0 & $2\beta, 0$  &  $0, 1$\\
	 \cline{2-3}
	 1	& $\beta + 1, 1$ & $1, 0$ \\
	 \cline{2-3}
	 \end{tabular}
	 \subcaption{$\mNE \back \mBBNE$}
	 \label{subfig: b}
}
\hfill
\parbox{0.3\linewidth}{
\centering
\begin{tabular}{c | c | c |}
	 \multicolumn{1}{c}{}	& \multicolumn{1}{c}{0}   &  \multicolumn{1}{c}{1}\\
	 \cline{2-3}
	 0 & $2\beta, 0$  &  $0,1$\\
	 \cline{2-3}
	 1	& $\beta + 1, \gamma$ & $1, 0$ \\
	 \cline{2-3}
	 \end{tabular}
	 \subcaption{$(\mNE \cap \BBNE) \back \pNE$}
	 \label{subfig: c}
}
\vspace{1cm}\\
\parbox{0.3\linewidth}{
\centering
\begin{tabular}{c | c | c |}
	 \multicolumn{1}{c}{}	& \multicolumn{1}{c}{0}   &  \multicolumn{1}{c}{1}\\
	 \cline{2-3}
	 0 & $-2\beta, 0$  &  $0,1$\\
	 \cline{2-3}
	 1	& $-(\beta + 1), 1$ & $-1, 0$ \\
	 \cline{2-3}
	 \end{tabular}
	 \subcaption{$(\mNE \cap \mBBNE) \back \BBNE$}
	 \label{subfig: d}
	 }
\hfill
\parbox{0.3\linewidth}{
\centering
\begin{tabular}{c | c | c |}
	 \multicolumn{1}{c}{}	& \multicolumn{1}{c}{0}   &  \multicolumn{1}{c}{1}\\
	 \cline{2-3}
	 0 & $2\beta + \epsilon, 0$  &  $0,1$\\
	 \cline{2-3}
	 1	& $\beta + 1, \gamma$ & $1, 0$ \\
	 \cline{2-3}
	 \end{tabular}
	 \subcaption{$\BBNE \back \mNE$}
	 \label{subfig: e}
}
\hfill
\parbox{0.3\linewidth}{
\centering
\begin{tabular}{c | c | c |}
	 \multicolumn{1}{c}{}	& \multicolumn{1}{c}{0}   &  \multicolumn{1}{c}{1}\\
	 \cline{2-3}
	 0 & $4, 0$  &  $0,1$\\
	 \cline{2-3}
	 1	& $3, 1$ & $1, 0$ \\
	 \cline{2-3}
	 \end{tabular}
	 \subcaption{$\mBBNE \back (\mNE \cup \BBNE )$}
	 \label{subfig: f}
}
	 \caption{Payoff matrices for the
	 $2\times2$ games
	 in example~\ref{ex: non-empty_regions}. 
	 The rows and the columns correspond to the actions of player $1$ and player $2$, respectively. 
	 In each cell, the left and right entries correspond to player $1$ and player $2$, respectively.
	 The labels indicate the corresponding regions in figure~\ref{fig: venn_diagram}. 
	 The game matrix for the example corresponding to region (g) is the same as that for the one corresponding to region (a).
     }\label{fig: non-empty regions}
\end{figure}
 \begin{figure}
 \parbox{0.45\linewidth}{
 \input{plots/Value_plot_example_b}
 }
 \hfill
 \parbox{0.45\linewidth}{
 \input{plots/Value_plot_example_e}
 }
 \end{figure}

 \qed
 \end{example}
%




\section{Conclusion}
\label{sec: conclusion}

In the study of non-cooperative game theory from a decision-theoretic viewpoint, it is important to distinguish between 
two types of randomization: 
\begin{enumerate}
	\item conscious randomizations implemented by the players, and
	\item randomizations in conjectures resulting from the beliefs held by the other players about the behavior of a given player.
\end{enumerate}
This difference becomes evident when the preferences of the players over lotteries do not satisfy betweenness, a weakened form of independence property.
We considered $n$-player normal form games where players have CPT preferences,
an important example of preference relation that does not satisfy betweenness.
This gives rise to four types of Nash equilibrium notions,
depending on the different types of randomizations.
We defined these different notions of 
equilibrium
and 
discussed the question of their existence. 
The results are summarized in table~\ref{tab: summary}.


\begin{table}
\centering
\renewcommand{\arraystretch}{1.2}
\begin{tabular}{>{\raggedright}p{4.5cm} >{\raggedright}p{2.2cm} >{\raggedright}p{3.3cm} >{\raggedright\arraybackslash}p{2.5cm}}
\toprule
Type of Nash equilibrium & Strategies & Conjectures & Always exists \\ 
\toprule
Pure Nash equilibrium & Pure actions & Exact conjectures & No \\ \midrule
Mixed action Nash equilibrium & Pure actions & Mixed conjectures &  Yes \citep{keskin2016equilibrium}\\	\midrule
Black-box strategy Nash equilibrium & Black box strategies & Exact conjectures & No (Example~\ref{ex: No blackboxNE})\\ \midrule
Mixed black-box strategy Nash equilibrium & Black box strategies & Mixed conjectures & Yes (Theorem~\ref{thm: BBNEexists}) \\
\bottomrule
\end{tabular}
\caption{Different types of Nash equilibrium when players have CPT preferences.}
\label{tab: summary}
\end{table}
\appendix

\section{Joint continuity of the concave hull of a jointly continuous function}
\label{app: continuity}

Let $\Delta^{m-1}$ and $\Delta^{n-1}$ be simplices of the corresponding dimensions with the usual topologies. Let $f: \Delta^{m-1} \times \Delta^{n-1} \to \bbR$ be a continuous function on $\Delta^{m-1} \times \Delta^{n-1}$ (with the product topology).
Let $\cal{P}(\Delta^{m-1})$ denote the space of all probability measures on $\Delta^{m-1}$ with the topology of weak convergence. 
Let $g : \Delta^{m-1} \times \Delta^{n-1} \to \bbR$ be given by
\[
	g(x,y) := \sup \l\{\bbE_{X \sim p} f(X,y) \big| p \in \cal{P}(\Delta^{m-1}), \bbE_{X \sim p} \id(X) = x \r\}.
\]
where $\id: \Delta^{m-1} \to \Delta^{m-1}$ is the identity function $\id (x) := x, \forall x \in \Delta^{m-1}$ and the expectation is over a random variable $X$ taking values in $\Delta^{m-1}$ with the distribution $p$.

\begin{proposition}
	The function $g(x,y)$ is continuous on $\Delta^{m-1} \times \Delta^{n-1}$.
\end{proposition}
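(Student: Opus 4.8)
The plan is to prove joint continuity by establishing joint upper and joint lower semicontinuity separately; since $\Delta^{m-1}\times\Delta^{n-1}$ is a compact metric space, it suffices to argue along convergent sequences $(x_k,y_k)\to(x,y)$. Two preliminary reductions make both halves tractable. First, $f$ is bounded (continuous on a compact set), so $g$ is finite, the lower bound coming from the Dirac measure $\delta_x$. Second, since $\Delta^{m-1}$ spans an affine space of dimension $m-1$, the graph $\{(x',f(x',y)):x'\in\Delta^{m-1}\}$ lives in an $m$-dimensional affine space, and Caratheodory's theorem shows that the supremum defining $g(x,y)$ is unchanged if one restricts to measures supported on at most $m+1$ atoms; a routine compactness argument (atoms range over the compact set $\Delta^{m-1}$, weights over a compact simplex, the barycenter constraint being closed) shows this supremum is attained.

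For upper semicontinuity I would pass to a subsequence along which $g(x_k,y_k)$ tends to $\limsup_k g(x_k,y_k)$, and for each such $k$ pick an optimal $p_k=\sum_{i=1}^{m+1}\lambda_i^k\,\delta_{v_i^k}$ with barycenter $x_k$. Compactness of the weights and atoms lets me extract a further subsequence with $\lambda_i^k\to\lambda_i$ and $v_i^k\to v_i$; the barycenter of the limit measure $\sum_i\lambda_i\delta_{v_i}$ is then $x$, and joint continuity of $f$ gives $\sum_i\lambda_i^k f(v_i^k,y_k)\to\sum_i\lambda_i f(v_i,y)\le g(x,y)$. Hence $\limsup_k g(x_k,y_k)\le g(x,y)$.

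The main obstacle is lower semicontinuity, and this is exactly where the simplex (more precisely, polyhedral) structure of $\Delta^{m-1}$ is indispensable. Fix $\epsilon>0$ and a finitely supported $p=\sum_i\lambda_i\delta_{v_i}$ with barycenter $x$ and $\sum_i\lambda_i f(v_i,y)\ge g(x,y)-\epsilon$. To lower bound $g(x_k,y_k)$ I must exhibit a measure with barycenter exactly $x_k$ whose $f(\cdot,y_k)$-average is close to this. The idea is to write $x_k=(1-\delta_k)x+\delta_k z_k$ and set $p_k=(1-\delta_k)p+\delta_k\delta_{z_k}$, which has barycenter $x_k$ by construction; then $\bbE_{p_k}f(\cdot,y_k)=(1-\delta_k)\sum_i\lambda_i f(v_i,y_k)+\delta_k f(z_k,y_k)$ converges to $\sum_i\lambda_i f(v_i,y)\ge g(x,y)-\epsilon$ provided $\delta_k\to0$ and $z_k\to x$. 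The only delicate point is to keep $z_k=x+\delta_k^{-1}(x_k-x)$ inside $\Delta^{m-1}$. Here I would choose $\delta_k\to0$ with $|x_k-x|/\delta_k\to0$ (e.g.\ $\delta_k=\max(\sqrt{|x_k-x|},1/k)$), so that $z_k\to x$, and then invoke the fact that a polytope is locally conical at every point: for the simplex, if $J=\{j:x^{(j)}=0\}$ then $x_k-x$ has nonnegative $j$-th coordinate for each $j\in J$ and coordinate sum zero, so $x_k-x$ lies in the tangent cone $T$ at $x$, hence so does $\delta_k^{-1}(x_k-x)$, and a small neighborhood $U$ of $x$ satisfies $\Delta^{m-1}\cap U=(x+T)\cap U$, giving $z_k\in\Delta^{m-1}$ for all large $k$. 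Letting $\epsilon\to0$ completes the argument. For a general compact convex domain this step can fail — at a strictly curved boundary point the blown-up direction would leave the set — which is precisely why the statement is phrased for a simplex.
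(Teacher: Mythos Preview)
Your argument is correct. The upper semicontinuity half matches the paper's approach almost exactly: the paper works directly with weak compactness of $\cal{P}(\Delta^{m-1})$ rather than first reducing to finitely supported measures via Caratheodory, but the extraction-and-limit step is the same.

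For lower semicontinuity the two arguments differ in execution while exploiting the same polyhedral structure. The paper triangulates $\Delta^{m-1}$ into subsimplices $T_1,\dots,T_k$, each having $x$ as a vertex together with $m-1$ of the original vertices; for a subsequence $x_{t_n}$ lying in a fixed $T_j$, barycentric coordinates give $x_{t_n}=(1-\sum_i z^i_{t_n})x+\sum_i z^i_{t_n}e_i$ with $z_{t_n}\to 0$, and concavity of $g(\cdot,y_{t_n})$ yields $g(x_{t_n},y_{t_n})\ge(1-\sum z^i_{t_n})g(x,y_{t_n})+\sum z^i_{t_n}g(e_i,y_{t_n})$, reducing the problem to $\liminf g(x,y_{t_n})\ge g(x,y)$, which follows by evaluating any fixed optimal measure for $(x,y)$ at $y_{t_n}$. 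Your route instead perturbs the near-optimal measure directly, using the tangent-cone description of $\Delta^{m-1}$ at $x$ to place the correcting atom $z_k$ inside the simplex. The paper's version is slightly slicker in that it never needs to build a test measure or choose a rate $\delta_k$; your version is more constructive and makes the role of the polyhedral boundary explicit through the tangent cone, and your closing remark about curved boundaries pinpoints exactly where both arguments would break down.
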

\begin{proof}
	We first prove that the function $g(x,y)$ is upper semi-continuous.
	Let $x_t \to x$ and $y_t \to y$. Let $\{g(x_{t_n},y_{t_n})\}$ be a convergent subsequence of $\{g(x_t,y_t)\}$ with limit $L$. It is enough to show that the limit $L \leq g(x,y)$. Since
	for all $n$ the set
	$\{p \in \cal{P}(\Delta^{m-1}), \bbE_{X \sim p} \id(X) = x_{t_n}\}$
	is compact,
	we know that there exists $p_{t_n} \in \cal{P}(\Delta^{m-1})$, 
	such that $g(x_{t_n},y_{t_n}) = \bbE_{X \sim p_{t_n}} [f(X,y_{t_n})]$ and $\bbE_{X \sim p_{t_n}} [\id (X)] = x_{t_n}$. 
	The sequence $\{p_{t_n} \}$ has a convergent subsequence, say $p_{t_{n_k}} \to \bar p$ (because $\cal{P}(\Delta^{m-1})$ is a compact space). 
	Now, $\bbE_{X \sim \bar p} [\id (X)] = \lim_k \bbE_{X \sim p_{t_{n_k}}} [\id (X)] = \lim_k x_{t_{n_k}} = x$. 
	Further, $\bbE_{X \sim p_{t_{n_k}}} [f(X,y_{t_{n_k}})] \to \bbE_{X \sim \bar p} [f(X,y)]$, 
	since the product distributions $p_{t_{n_k}} \times \1\{ y_{t_{n_k}}\}$, for all $k$, on $\Delta^{m-1} \times \Delta^{n-1}$,  
	converge weakly to the product distribution $\bar p \times \1 \{y\}$.
	Thus, $L = \bbE_{X \sim \bar p} [f(X,y)] \leq g(x,y)$ and the function $g(x,y)$ is upper-semicontinuous.

	We now prove that the function $g(x,y)$ is lower semi-continuous.
	Let $x_t \to x$ and $y_t \to y$. The simplex $\Delta^{m-1}$ can be triangulated into finitely many other simplices, say $T_1,\dots,T_k$, whose vertices are $x$ and some $m-1$ of the $m$ vertices of $\Delta^{m-1}$.
	Let $(x_{t_n})$ be any subsequence such that all $x_{t_n} \in T_j$ for some simplex. It is enough to show that the $\liminf$ of the sequence $\{g(x_{t_n},y_{t_n})\}$ is greater than or equal to $g(x,y)$. Let the other vertices of $T_j$ be $e_1, \dots , e_{m-1}$. Let $z_{t_n} = (z_{t_n}^1,\dots,z_{t_n}^l)$ be the barycentric coordinates of $x_{t_n}$ with respect to the simplex $T_j$, i.e.
	\[
		x_{t_n} = (1 - z_{t_n}^1 - \dots - z_{t_n}^{m-1})x + z_{t_n}^1 e_1 + \dots + z_{t_n}^{m-1} e_{m-1}.
	\]
  The function $g(x,y)$ is concave in $x$ for any fixed $y$ by construction.
  We have,
	\[
		g(x_{t_n},y_{t_n}) \geq (1 - z_{t_n}^1 - \dots - z_{t_n}^{m-1})g(x,y_{t_n}) + z_{t_n}^1 g(e_1,y_{t_n}) + \dots + z_{t_n}^{m-1} g(e_{m-1},y_{t_n}).
	\]
	Since $z_{t_n} \to (0,\dots,0)$ and $g(e_1,y_{t_n}), \dots,g(e_{m-1},y_{t_n})$ are all finite we get,
	\[
		\liminf g(x_{t_n},y_{t_n}) \geq \liminf g(x,y_{t_n}).
	\]
	Let 
	$\tilde p \in \cal{P}(\Delta^{m-1})$ 
	be such that $\bbE_{X \sim \tilde p} [f(X,y)] = g(x,y)$ and $\bbE_{X \sim \tilde p} [\id (X)] = x$. Then, $g(x,y_{t_n}) \geq \bbE_{X \sim \bar p} [f(X,y_{t_n})]$, for all $n$, and hence,
	\[
		\liminf g(x,y_{t_n}) \geq \liminf \bbE_{X \sim \tilde p} [f(X,y_{t_n})] = g(x,y).
	\]
	This shows that the function $g(x,y)$ is lower semi-continuous.
	
	Since the function $g(x,y)$ is upper and lower semi-continuous, it is continuous.
\end{proof}


\section{An interesting functional equation}
\label{sec: proofs}

\begin{lemma}
\label{lem: w_functionaleq}
	Let $w: [0,1] \to [0,1]$ be a continuous, strictly increasing function such that $w(0) = 0$ and $w(1) = 1$.
	For any $0 \leq a_1 < c_1 < b < c_2 < a_2 \leq 1$ such that $(a_2 - b)(b - c_1) = (b - a_1)(c_2 - b)$, let
 \begin{align}
 \label{eq: ratio_function}
	\l[w(a_2) - w(b)\r]&\l[w(b) - w(c_1)\r] =  \l[w(b) - w(a_1)\r]\l[w(c_2) - w(b)\r].
\end{align}	
Then $w(p) = p$ for all $p \in [0,1]$.
\end{lemma}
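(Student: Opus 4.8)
The plan is to exploit the constraint $(a_2-b)(b-c_1)=(b-a_1)(c_2-b)$ to decouple the behaviour of $w$ to the right and to the left of a fixed interior point $b$. First I would reparametrize: writing $t:=b-c_1>0$ and $u:=c_2-b>0$, the constraint reads $(a_2-b)t=(b-a_1)u$, which is equivalent to $b-a_1=\lambda t$ and $a_2-b=\lambda u$ for a common factor $\lambda>1$ (the stretch factors on the left and on the right are forced to coincide, and $\lambda>1$ encodes exactly $a_1<c_1$ and $c_2<a_2$). Thus every admissible configuration has the form $a_1=b-\lambda t$, $c_1=b-t$, $c_2=b+u$, $a_2=b+\lambda u$, and \eqref{eq: ratio_function} becomes
\[
\frac{w(b+\lambda u)-w(b)}{w(b+u)-w(b)}=\frac{w(b)-w(b-\lambda t)}{w(b)-w(b-t)}.
\]

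Now fix $b$ and $\lambda$. The left-hand side depends only on $u$ and the right-hand side only on $t$, and the identity holds for all admissible $u,t$ independently; hence each side equals a constant $G(b,\lambda)$. Setting $R(v):=w(b+v)-w(b)$ on $(0,1-b]$, this says $R(\lambda u)=G(b,\lambda)R(u)$ for every $\lambda>1$, so $R(v)/R(u)$ depends only on the ratio $v/u$ throughout $(0,1-b]$. Passing to $\psi(s):=\ln R(e^{s})$ turns this into the statement that $\psi(s+h)-\psi(s)$ is independent of $s$; since $R$ is continuous and strictly positive, $\psi$ is continuous, the increment function is additive and continuous, hence linear, and therefore $R(v)=C_b\,v^{\alpha(b)}$ with $C_b>0$ and $\alpha(b)>0$. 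In other words,
\[
w(x)=w(b)+C_b\,(x-b)^{\alpha(b)}\qquad\text{for all }x\in[b,1],
\]
and this holds for every $b\in(0,1)$.

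The decisive step is to upgrade these one-parameter families of expansions into global linearity by forcing $\alpha(b)\equiv 1$. I would fix $b_0\in(0,1)$ and an arbitrary $b_1\in(b_0,1)$. The expansion centered at $b_0$ is valid on all of $[b_0,1]\supset[b_1,1]$, so the one-sided derivative $w'(b_1^{+})=C_{b_0}\alpha(b_0)(b_1-b_0)^{\alpha(b_0)-1}$ is a finite, strictly positive number. On the other hand, the expansion centered at $b_1$ gives $w'(b_1^{+})=\lim_{x\to b_1^{+}}C_{b_1}\alpha(b_1)(x-b_1)^{\alpha(b_1)-1}$, which is $0$ if $\alpha(b_1)>1$ and $+\infty$ if $\alpha(b_1)<1$. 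Finiteness and positivity force $\alpha(b_1)=1$; since $b_0$ may be taken arbitrarily small, $\alpha(b)=1$ for every $b\in(0,1)$. Consequently $w$ is affine on each $[b,1]$, the slopes agree on overlaps, so $w$ is affine on $(0,1]$ and, by continuity, on $[0,1]$; the boundary conditions $w(0)=0$ and $w(1)=1$ then yield $w(p)=p$.

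I expect the main obstacle to be justifying the power-law conclusion cleanly, namely the multiplicative Cauchy argument together with the bookkeeping of the admissible ranges of $u,t,v$ needed to conclude that $R(v)=C_b v^{\alpha(b)}$ on the \emph{entire} interval $(0,1-b]$ rather than merely near $b$, and then rigorously excluding $\alpha(b)\neq 1$. The reparametrization that separates the left and right behaviour is the conceptual crux; once the expansions are in hand, the exponent is pinned to $1$ because a genuine power with exponent $\neq 1$ centered at $b$ cannot simultaneously be a power centered at any other point of its domain.
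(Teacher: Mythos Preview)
Your proof is correct and takes a genuinely different route from the paper's.

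The paper proceeds by direct evaluation: it plugs four specific 5-tuples $(a_1,c_1,b,c_2,a_2)$ --- namely $(0,\tfrac14,\tfrac12,\tfrac34,1)$, $(0,\tfrac14,\tfrac13,\tfrac12,1)$, $(0,\tfrac13,\tfrac12,\tfrac23,1)$, and $(0,\tfrac12,\tfrac23,\tfrac34,1)$ --- into \eqref{eq: ratio_function} to obtain algebraic relations among $w(\tfrac14),w(\tfrac13),w(\tfrac12),w(\tfrac23),w(\tfrac34)$, and after elimination deduces $w(\tfrac12)=\tfrac12$. It then observes that for any $0\le x<y\le 1$ the rescaled function $w'(p')=\frac{w(p'(y-x)+x)-w(x)}{w(y)-w(x)}$ satisfies the same hypotheses, whence $w'(\tfrac12)=\tfrac12$, i.e.\ $w\bigl(\tfrac{x+y}{2}\bigr)=\tfrac{w(x)+w(y)}{2}$; dyadic density and continuity finish.

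Your approach instead separates variables via the reparametrization $(a_1,c_1,c_2,a_2)=(b-\lambda t,\,b-t,\,b+u,\,b+\lambda u)$, obtains a multiplicative Cauchy equation for $R(v)=w(b+v)-w(b)$, solves it to get a one-sided power law $w(x)=w(b)+C_b(x-b)^{\alpha(b)}$ on each $[b,1]$, and pins $\alpha(b)=1$ by comparing right derivatives at an interior point from two overlapping expansions. The paper's argument is more elementary (four substitutions and a self-similarity trick yielding Jensen's midpoint equation, no Cauchy machinery); yours is more structural, since the separation of variables explains \emph{why} the hypothesis forces linearity and avoids the somewhat opaque algebra leading to $w(\tfrac12)=\tfrac12$. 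Both are short; the paper's needs nothing beyond continuity and midpoint-affinity on dyadics, while yours makes transparent that the admissible configurations are rich enough to decouple the left and right increments of $w$ completely.
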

\begin{proof}
	Taking 
	$a_1 = 0, c_1 = 1/4, b = 1/2, c_2 = 3/4$ and $a_2 = 1$
	in \eqref{eq: ratio_function}
	we get,
 \begin{align*}
	\l[1 - w(1/2)\r]&\l[w(1/2) - w(1/4)\r] =  \l[w(1/2)\r]\l[w(3/4) - w(1/2)\r],
\end{align*}
and hence,
\[
	w(3/4) = \frac{w(1/2) + w(1/2)w(1/4) - w(1/4)}{w(1/2)}.
\]
Note that $w(1/2) > 0$.
Taking 
	$a_1 = 0, c_1 = 1/4, b = 1/3, c_2 = 1/2$ and $a_2 = 1$
in \eqref{eq: ratio_function} 
we get,
 \begin{align*}
	\l[1 - w(1/3)\r]&\l[w(1/3) - w(1/4)\r] = \l[w(1/3)\r]\l[w(1/2) - w(1/3)\r],
\end{align*}
and hence,
	\[
		w(1/3) = \frac{w(1/4)}{1 - w(1/2) + w(1/4)}.
	\]
Note that $1 - w(1/2) + w(1/4) > 1 - w(1/2) > 0$.
Taking 
	$a_1 = 0, c_1 = 1/3, b = 1/2, c_2 = 2/3$ and $a_2 = 1$
in \eqref{eq: ratio_function}
we get,
 \begin{align*}
	\l[1 - w(1/2)\r]&\l[w(1/2) - w(1/3)\r] =  \l[w(1/2)\r]\l[w(2/3) - w(1/2)\r],
\end{align*}
and substituting for $w(1/3)$ we get,
	\[
		w(2/3) = \frac{w(1/2) - w(1/2)^2 + 2w(1/2)w(1/4) - w(1/4)}{w(1/2) - w(1/2)^2 + w(1/2)w(1/4)}.
	\]
Note that 
$$w(1/2) - w(1/2)^2 + w(1/2)w(1/4) = w(1/2)[1 - w(1/2) + w(1/4)] > 0.$$
Taking 
	$a_1 = 0, c_1 = 1/2, b = 2/3, c_2 = 3/4$ and $a_2 = 1$
in \eqref{eq: ratio_function} 
we get,
 \begin{align*}
	\l[1 - w(2/3)\r]&\l[w(2/3) - w(1/2)\r] =  \l[w(2/3)\r]\l[w(3/4) - w(2/3)\r].
\end{align*}
Simplifying we get,
\begin{align*}
	w(2/3) - w(2/3)w(3/4) = w(1/2) - w(1/2)w(2/3), 
\end{align*}
Substituting for $w(2/3)$ and $w(3/4)$ we get,
\begin{align*}
	&\l[\frac{w(1/2) - w(1/2)^2 + 2w(1/2)w(1/4) - w(1/4)}{w(1/2) - w(1/2)^2 + w(1/2)w(1/4)}\r] \l[\frac{w(1/4) - w(1/2)w(1/4)}{w(1/2)}\r] \\
	&= w(1/2) \l[\frac{w(1/4) - w(1/2)w(1/4)}{w(1/2) - w(1/2)^2 + w(1/2)w(1/4)} \r].
\end{align*}
Since $w(1/4) - w(1/2)w(1/4) > 0$ and $w(1/2) - w(1/2)^2 + w(1/2)w(1/4) > 0$, we get
\begin{align*}
	w(1/2) - w(1/4) = 2w(1/2)[w(1/2) - w(1/4)].
\end{align*}
 Since $w(1/2) - w(1/4) > 0$, we get
$w(1/2) = 1/2$.

For any fixed $0 \leq x < y \leq 1$,
let 
	$$w'(p') := \frac{w(p'(y-x) + x) - w(x)}{w(y) - w(x)}, \text{ for all } 0 \leq p' \leq 1.$$ 
	Note that $w':[0,1] \to [0,1]$ is a continuous, strictly increasing function with $w'(0) = 0$ and $w'(1) = 1$.
	Further, if $0 \leq a'_1 < c'_1 < b' < c'_2 < a'_2 \leq 1$ are such that $(a'_2 - b')(b' - c'_1) = (b' - a'_1)(c'_2 - b')$, then
	 \begin{align*}
	\l[w'(a'_2) - w'(b')\r]&\l[w'(b') - w'(c'_1)\r] = \l[w'(b') - w'(a'_1)\r]\l[w'(c'_2) - w'(b')\r].
\end{align*}	
Thus $w'(1/2) = 1/2$ and hence
$w\l((x + y)/{2}\r) = (w(x) + w(y))/{2}.$
Using this repeatedly we get
$w({k}/{2^t}) = {k}/{2^t},$
for $0 \leq k \leq 2^t$, $t = 1, 2, \dots$.
Continuity of $w$ then implies $w(p) = p$, for all $p \in [0,1]$.
\end{proof}

\bibliographystyle{abbrvnat} 
\bibliography{Bib_Database}

\end{document}